\journal{arXiv}
\providecommand{\assumptionname}{Assumption}
\providecommand{\lemmaname}{Lemma}
\providecommand{\remarkname}{Remark}
\providecommand{\theoremname}{Theorem}
\providecommand{\definitionname}{Definition}
\providecommand{\notationname}{Notation}
\theoremstyle{plain} \newtheorem{thm}{\protect\theoremname}
\theoremstyle{remark} \newtheorem{rem}{\protect\remarkname}
\theoremstyle{plain} \newtheorem{lem}{\protect\lemmaname}
\theoremstyle{plain} \newtheorem{assumption}{\protect\assumptionname}
\theoremstyle{definition} \newtheorem{defn}{\protect\definitionname}
\theoremstyle{definition} \newtheorem{notation}{\protect\notationname}
\newcommand{\E}{\mathbb{E}} %
\renewcommand{\P}{\mathbb{P}} %
\newcommand{\R}{\mathbb{R}} %
\newcommand{\V}{\operatorname{Var}} %
\newcommand{\C}{\operatorname{Cov}} %
\newcommand{\B}{\operatorname{Bias}} %
\newcommand{\argmin}{\mathop{\textrm{argmin}}} %
\begin{document}

\begin{frontmatter}
  \date{}

  \title{Rates of convergence in conditional covariance matrix
    with nonparametric entries estimation}

  \author[jm]{Jean-Michel Loubes} %
  \ead{loubes@math.univ-toulouse.fr}

  \author[cm]{Clement Marteau} %
  \ead{marteau@math.univ-lyon1.fr}

  \author[ms]{Maikol Sol\'is \corref{cor1}} %
  \ead{maikol.solis@ucr.ac.cr} \cortext[cor1]{Corresponding author}

  \address[jm]{Institut de Math\'ematiques de Toulouse, Toulouse,
    France.} \address[cm]{Institut Camille Jordan, Lyon, France.}
  \address[ms]{Universidad de Costa Rica, San Jos\'e, Costa Rica. }

  \begin{abstract}
    Let $X\in \R^p$ and $Y\in \R$ be two random variables. We estimate the conditional covariance matrix $\C\left(\E\left[\boldsymbol{X}\vert Y\right]\right)$ applying a plug-in kernel-based algorithm to its entries. Next, we investigate the estimators rate of convergence under smoothness hypotheses on the density function of $(\boldsymbol{X},Y)$. In a high-dimensional context, we improve the consistency the whole matrix estimator by providing an decreasing structure over the $\C\left(\E\left[\boldsymbol{X}\vert Y\right]\right)$ entries. We illustrate a sliced inverse regression setting for time series matching the conditions of our estimator.
  \end{abstract}

\begin{keyword}
 Sliced Inverse Regression\sep Conditional covariance\sep Frobenius norm\sep  Nonparametric estimator \sep Parametric rate

  \MSC[2010] 62G08\sep 62H12\sep 62G20
\end{keyword}

\end{frontmatter}

% \linenumbers{}

\section{Introduction}
\label{sec:intro}

The multivariate regression models the relationship between two random variables, the output $Y\in\R$ and the inputs $\boldsymbol{X}\in \R^p$. With the advance of modern acquisition devices, the number $p$ is large, which increase the model's complexity. Extracting relevant patterns between $\boldsymbol{X}$ and $Y$ has become a source of theoretical and practical studies in different areas.

The most common example is given by the regression model
\begin{equation}
  \label{eq:sir}
  Y = \psi(\boldsymbol{X}) + \varepsilon,
\end{equation}44
where $\psi$ is an unknown function. We could guess a parametric form for the data and adjust it with least squared errors or maximum likelihood. However, the \textit{curse of dimensionality} occurs when the number of variables exceeds the number of observations. Namely, when $p$ gets larger than the number of data $n$, the volume of empty spaces growth. The modeling capacity in this scenario diminishes, causing unpredictable conclusions. The paper of~\cite{cook2007dimension} reviews some techniques of dimension reduction in regression.

In a seminal contribution,~\cite{li1991sliced} considered the regression model
\begin{equation}
  \label{eq:sir-simplified}
  Y = \varphi\left( \nu_{1}^{\top} \boldsymbol{X}, \ldots, \nu_{k}^{\top}
    \boldsymbol{X}, \varepsilon \right),
\end{equation}
where $k$ is much less than $p$, which is denoted as $k\ll p$. The $\nu$'s are unknown fixed vectors, $\varepsilon$ is independent of $\boldsymbol{X}$, and $\varphi$ is a $\R^{k + 1}$ arbitrary real valued function. By projection, this model implies the extraction of all the relevant information for $Y$ using only a $k$-dimensional subspace generated by the $\nu$'s. These directions are called effective dimension reduction directions.

The main idea of the sliced inverse regression method is to estimate the unknown matrix
\begin{equation}
  \label{eq:def_Cov_E_X|Y}
  \Sigma = \C\left( \E\left[\boldsymbol{X}\vert Y\right]\right) = \left(\sigma_{ij}\right)_{p\times p},
\end{equation}
where we denote $\sigma_{ij}$ as the $(i,j)$ matrix element. This matrix is degenerate in any direction orthogonal to the $\nu$s. Therefore, the eigenvectors, $\nu_j$ ($j=1,\ldots,k$), associated with the largest $k$ eigenvalues of $\Sigma$ are the effective dimension reduction directions. This leads to the classical sliced inverse regression method. It slices the inverse regression curve, approximates its empirical covariance from those slices, and then estimates the largest eigenvalues with their corresponding eigenvectors. The first $k$ eigenvectors span the effective dimension reduction subspace.

% Given two random variables, the conditional covariance matrix %

% plays a key role in techniques used for dimension reduction.
% For example, thneovim latexe sliced inverse regression method is a popular
% approach based on the accurate estimation of the matrix $\Sigma$
% \cite[for example, see][]{li1991sliced}.
% In Section \ref{sec:application-sir}, we discuss this application.

% The properties of this matrix for reducing the complexity in this
% context has been treated by many authors, for instance
% \citet{hsing1992asymptotic}, \citet{duan1991slicing},
% \citet{li1991sliced} and the references therein.

% weSeveral parametric and nonparametric techniques have been proposed
% for building an estimator for $\Sigma$.
The following authors have contributed to the estimation of $\Sigma$. For example, \citet{hsing1999nearest} worked with nearest neighbors and sliced inverse regression. \citet{setodji2004k} and \citet{cook2005sufficient} transformed the sliced inverse regression into a least squares minimization problem using a $k$-means algorithm. \citet{ferre2003functional}, \citet{ferre2005smoothed}, and \citet{zhu1996asymptotics} developed nonparametric methods involving kernel estimators to model $\Sigma$. \citet{bura2001estimating} assumed some parametric form for $\E\left[\boldsymbol{X}\vert Y\right]$. Alternatively, \citet{solis2011efficient} used a functional Taylor approximation on $\C\left(\E\left[\boldsymbol{X}\vert Y\right]\right)$ to get an efficient estimate.

Works like \cite{Becker2002}, \cite{Gather2004} and \cite{Dahlhaus2000} points out investigations in healthcare models. Those models are based in time series and require a particular covariance matrix. In Section~\ref{sec:sir-time-series} we will discuss how to formulate the problem in this context and point out some applications.

% We will use the kernel approach technique in the present paper. In the
% articles listed before, the algorithms to estimate the desired matrix
% do not give any explicit result about its rate of convergence. We will
% also tackle that problem in this work.

% In this paper, we will extend the nonparametric framework introduced
% by \citet{zhu1996asymptotics} by finding the convergence rate for the
% estimate. Assuming that the general model belongs to some H?lder
% functional space (defined in Section \ref{sec:Hypothesis}), we are
% capable of showing that $\hat{\sigma}_{ij}$ attains a parametric rate
% of $ 1/n $. Otherwise, the estimate could only reach a nonparametric
% convergence rate, depending on the regularity of the model.

% The next natural step is to extend that rate to the whole matrix
% estimate.
Given an i.i.d.\ sample ${(\boldsymbol{X}_i,Y_i)}_{i=1\dots n}$, our aim is to build an estimator of $\Sigma$ when the joint density of $(\boldsymbol{X},Y)$ is unknown.
% viewed as a covariance estimate with unknown but estimable joint
% density.
To achieve this aim, we plug a marginal density estimate into the conditional covariance parametric estimator and study its asymptotic behavior. Under some smoothness assumptions, we consider the marginal density of $Y$ as a nuisance parameter, but without perturbing the convergence rate for the covariance.

We use the nonparametric estimator developed by \citet{zhu1996asymptotics} to compute the elements entrywise for $\Sigma=(\sigma_{ij})$. Thanks to a kernel estimator we can recover the unknown marginal density of $Y$ and the vector of conditional densities $\E[\boldsymbol{X}\vert Y]$. We propose a new estimator for the conditional covariance matrix, $\Sigma$, based on a plug-in version of the banding estimator. We use the normalized Frobenius norm to measure the squared risk over a class of matrices. Provided that the model is regular enough, it is possible to achieve a pointwise parametric rate of convergence for the estimator of $\sigma_{ij}$. The estimator of $\Sigma$ has a parametric behavior with respect to the Frobenius norm. In these cases, the conditional covariance matrix $\Sigma$ estimator becomes an efficient semiparametric issue.

If $p$ is larger than $n$, we face high-dimensional issues. Thus, our estimator $\hat\Sigma$ of the matrix $\Sigma$ will have unexpected features, e.g., a lack of consistency or significant spreading of the eigenvalues. In a different context, we refer to \citet{marcenko1967distribution}, \citet{johnstone2001distribution}, and their references. We will suggest a banding regularization (see \citet{bickel2008regularized}) to avoid any inconsistency with the rate of convergence.

% To establish consistency, a regularization step appears to be necessary. Examples of regularization techniques include: banding methods in \citet{wu2009banding} and \citet{bickel2008regularized}, tapering in \citet{furrer2007estimation} and \citet{cai2010optimal}, thresholding in \citet{bickel2008covariance} and \citet{elkaroui2008operator}, penalized estimation in \citet{huang2006covariance}, \citet{lam2009sparsistency}, and \citet{rothman2008sparse}, and regularizing principal components in \citet{zou2006sparse}.
% We will adopt the tapering technique to regularizate the conditional
% covariance estimator.

% Our approach consist, first, in the estimation coordinate-wise of
% $\Sigma$ by a free-distribution technique.

The rest of this paper is organized as follows. Section~\ref{sec:Methodology} describes the nonparametric algorithm introduced by \citet{zhu1996asymptotics} for the estimation of $\sigma_{ij}$. In Section~\ref{sec:Hypothesis}, we present all of the assumptions required for the consistency and convergence of our estimator. The main focus of this study is given in Section \ref{sec:upper-bound-element}, where we present the convergence rate for the estimator $\hat{\sigma_{ij}}$. As an additional contribution, in Section \ref{sec:consistency-under-frob-norm}, we extend our study to the whole matrix by assuming a particular arrangement for $\Sigma$. One application of SIR and time series is presented in Section~\ref{sec:sir-time-series}. Finally, we give our conclusions in Section \ref{sec:Conclusions}. All of the technical proofs are gathered in Section \ref{sec:Appendix}.

% \section{Methodology for non parametric conditional covariance
% estimation}\label{sec:Methodology}

% Let $\boldsymbol{X}\in\R^{p}$ be a random field and $Y\in\R$ be a
% random variable with joint density $ f(x,y)$. We assume that we
% observe an independent sample of $Z=(\boldsymbol{X},Y)$, namely
% $Z_k=(\boldsymbol{X}_{k},Y_k)$ for $k=1,\dots,n$. Without loss of
% generality, we assume $\E[X_i]=0,\ k=1,\dots,p$.

% In the paper, we will use the following to find the converge rate for
% the estimator. Finally we finish with some conclusions in Section
% \ref{sec:Conclusions}. Technical proof are gathered in Section
% \ref{sec:Technical-results}.

\section{Methodology for nonparametric conditional covariance
  estimation}\label{sec:Methodology}

Let $\boldsymbol{X} = (X_1, \ldots, X_p) \in\R^{p}$ be a random vector and $Y\in\R$ be a random variable. We denote $f_{i}(x_{i},y)$ the joint density of the couple $(X_{i},Y)$. Let $f_Y (\cdot) = \int_{\R} f_{i}(x_{i},\cdot)dx_{i}$ be the marginal density function with respect to $Y$.
% We denote $f(x,y)$ as the joint density of the couple
% $(\boldsymbol{X},Y)$.
% Let $f_Y (\cdot) = \int_{\R^p} f(x,\cdot)dx$ be the marginal density
% function with respect to $Y$.

Suppose that $\boldsymbol{X}_k^{\top} = (X_{1k}, \ldots, X_{pk})$ and $Y_k$ for $k=1,\ldots,n$ are i.i.d.$\!$ observations with the same law as the random vector $\boldsymbol{X}^\top = (X_{1}, \ldots, X_{p})$ and $Y$. Without loss of generality, we suppose that $\E[X_i]=0$ for $i=1,\dots,p$.

Based on the sample $\mathcal{S} =(\boldsymbol{X}_k,Y_k)_{k=1..n}$, our aim is to estimate the covariance matrix
\begin{equation*} %\label{eq:def_Cov_E_X|Y}
  \Sigma = \left(\sigma_{ij} \right)_{i, j=1, \dots, p} =
  \left(\C(\E(X_i \vert Y), \E(X_j\vert Y))\right)_{i, j=1, \dots, p}.
\end{equation*}
We will use a nonparametric method in an inverse regression framework, based on the work of \cite{zhu1996asymptotics}. Following this paper, we introduce the notation,
\begin{align}
  R_i(Y) %
  & = \E[X_{i}\vert Y] \nonumber\\
  \begin{split}\label{eq:g_i}
    g_i(Y) & = R_i(Y)f_Y(Y) = \int_\mathbb{R} x_i f_{i}(x_i,Y)dx_i \quad \forall i \in \{1,\ldots,p\}.
  \end{split}
\end{align}

For any $i,j \in \{1,\ldots,p\}$, we write the $(i,j)$-entry of $\Sigma$ as
\begin{align*}
  % \label{eq:sigma_ij_descomposition_g_i_g_j_f_Y}
  \sigma_{ij}
  & =\E[\E[X_{i}\vert Y]\E[X_{i}\vert Y]] =\int_\mathbb{R}
    \frac{g_{i}(y)g_{j}(y)}{f^2_Y(y)}f_Y(y)dy.
\end{align*}
which involves two unknown functions: $g_i(\cdot)$ and $f_Y(\cdot)$. We will estimate and control these functions under some assumptions described in Section \ref{sec:Hypothesis}.
% We estimate them using a nonparametric method (based on the work of
% \cite{zhu1996asymptotics}) in an inverse regression framework.

For the sake of simplicity, assume in a first time $f_Y(\cdot)$ known. We can preliminary estimate $\sigma_{ij}$ as
% Denote $\hat{\sigma}^{f_Y}_{ij}$ as the following estimator of
% $\sigma_{ij}$,
\begin{equation}
  \tilde{\sigma}_{ij} %
  \frac{1}{n} \sum_{k=1}^{n} \frac{\tilde{g}_{i}(Y_{k}) \tilde{g}_{j}(Y_{k})}{f_Y^{2}(Y_{k})},
  \label{eq:def_sigma_f_ij}
\end{equation}
where
\begin{equation*}
  \tilde{g}_{i}(Y_{k}) = \frac{1}{nh} \sum_{l=1}^{n} X_{il} K\left(\frac{Y_{k}
      -Y_{l}}{h}\right).
\end{equation*}

Here, $K(u)$ is a kernel function that satisfies some conditions (see Assumption \ref{ass:kernel} below), and $h$ is a bandwidth that depends on $n$.

The drawback with this approach is that we use twice the same observations. Indeed, we need the whole sample to estimate first the quantities $\tilde{g}_i$ and $\tilde{g}_j$, and then to estimate $\tilde{\sigma}_{ij}$. Thus, dependency issues arise for the estimation of these functions. which entails an over-adjustment by using training data as testing data in the model. To correct this, we change the estimator $\tilde{g}_{i}(Y_{k})$ by removing the $k^{\text{th}}$ observation as follows,
\begin{equation*}
  \hat{g}_{i}(Y_{k}) %
  = \frac{1}{(n-1)h} \sum_{\substack{l=1  \\ l
      \neq k}}^{n} X_{il} K\left(\frac{Y_{k}-Y_{l}}{h}\right).
\end{equation*}

% To prevent any further difficulties, we will use a subsample of size
% $n_1 < n$ to estimate $\hat{g}_j$, while the remaining data $n=n-n_1$
% will be used to estimate $\hat{f}_Y$.
% For the sake of simplicity, we choose $n_1 = n = n/2$.
% Moreover, we will remove all the repeated observations in the
% $\hat{g}_i$'s estimation.
% This assumption entails the mutually independence between the
% $\hat{g}_i$'s and $\hat{f}_Y$.

Following the same principle, the next step is to replace the function $f_Y(\cdot)$ by its nonparametric estimator in equation \eqref{eq:def_sigma_f_ij}, i.e.,
\begin{equation*}
  \hat{f}_Y(Y_{k})
  = \frac{1}{(n-1)h} \sum_{\substack{l=1 \\ l \neq k}}^{n}
  K\left(\frac{Y - Y_{l}}{h}\right).
\end{equation*}

However, to avoid dividing by small values in the denominator, we will replace $\hat{f}_{Y}(y)$ by a corrected version $\hat{f}_{Y,b}(y) = \max\{\hat{f}_Y(y),b\}$ with $b$ given by some conditions (see Assumption \ref{ass:f-small-values} and Remark~\ref{ass:order_conv_h_b}).

% In this context, given that the bandwidth $h$ depends on $n$, we can
% estimate $\sigma_{ij}$ by
% % by $\hat{\Sigma}=(\hat{\sigma}_{ij})_{p\times p}$, where for all
% % $i,j\in \{1,\ldots,p\}$
% % \begin{equation}
% %   \hat{\sigma}_{ij} =\frac{1}{n} \sum_{k=1}^{n}
% %   \frac{\hat{g}_{i}( Y_{k})\hat{g}_{j}(Y_{k})}{f^{2}(Y_{k})}
% % \end{equation} where
% \begin{equation*}
%   %   \tilde{\sigma}_{ij} %
%   \frac{1}{n} \sum_{k=1}^{n} \frac{\hat{g}_{i}(Y_{l})
%     \hat{g}_{j}(Y_{k})}{\hat{f}_Y^{2}(Y_{k})}
%   %     \label{eq:def_estimator_Sigma}
% \end{equation*}
% %
% where
% %

Therefore, we propose the following estimator for $\sigma_{ij}$,
\begin{equation}
  \label{eq:def_estimator_Sigma_b}
  \hat{\sigma}_{ij} = \frac{1}{n} \sum_{k=1}^{n}
  \frac{\hat{g}_{i}(Y_{k})
    \hat{g}_{j}(Y_{k})}{\hat{f}_{Y,b}^{2}(Y_{k})}.
\end{equation}

Equation \eqref{eq:def_estimator_Sigma_b} presents a robust estimator for $\sigma_{ij}$ due to all consideration already exposed. We shall use it for the rest of the article. In some sense, we consider a semiparametric framework in this contribution. Our aim is to study the conditions under which the density of $Y$ is a mere blurring parameter that plays none role in the estimation procedure. Here, the plug-in method does not hinder the estimation rate for the conditional covariance $\Sigma$, thereby leading to an efficient estimation rate. In the next section, we establish the rate of convergence for the mean squared risk of $\hat{\sigma}_{ij}$.

\section{Main theoretical results}

\subsection{Assumptions}%
\label{sec:Hypothesis}

Here and belowe, $C$, $C_{1}$, and $C_{2}$ denote constants (independent of $n$), which may take different values.

In the following, we assume that $f(x,y)$ has compact support. Let $\beta$ a positive parameter controlling the smoothness for marginal probability functions of $f$. We need to impose a formal structure for $f(x,y)$ related with $\beta$, to guarantee parametric consistency ---convergence in the order of $n^{-1}$--- for our model. Definition~\ref{def:Holder} links the parameter $\beta$ with the number of finite derivatives of those marginals, through a H\"older class of smooth functions. Define $\lfloor\beta\rfloor$ as the largest integer such that $\lfloor\beta\rfloor\leq\beta$.

\begin{defn}
  \label{def:Holder}
  Denote $\mathcal{H}(\beta,L)$ as the \textit{H\"older class} of density functions with smoothness $\beta>0$ and radius $L>0$, which is defined as the set of $\left\lfloor \beta\right\rfloor$ times differentiable functions $\phi:T\to\R$, where $T$ is a compact interval in $\R$, for which the derivative $\phi^{(\lfloor\beta\rfloor)}$ satisfies

  \begin{equation*}
    \left| \phi^{\left(\lfloor\beta\rfloor\right)}(x) -
      \phi^{\left(\lfloor\beta\rfloor\right)}(x^{\prime}) \right| \leq L \left|
      x - x^{\prime} \right|^{\beta-\lfloor\beta\rfloor}, \ \forall x,
    x^{\prime} \in T.
  \end{equation*}

\end{defn}

The proofs use the following assumption continuously.
% \textcolor{red}{}
% \begin{comment}
%   \textcolor{red}{This condition I think can be replaced by
%   $g_{i}(y)\in\Sigma(\beta,L)$. The only parts in which we need this
%   assumption is on equations (\ref{pb:prob_hyp_g}) and I think that
%   with only Liptschitz is enough.}
%   \begin{assumption}
%     \textcolor{red}{\label{ass:Lipschitz_third_derivate_g}All
%     $g_{i}(y)=R_{i}(y)f(y)\ j=1,\ldots,p$ are three-times
%     differentiable and its third derivative satisfy that there exists
%     a neighborhood of the origin, say $U$, and a constant $c>0$ such
%     that, for any $u\in U$,
%     \[
%       \vert g_{i}^{(3)}(y + u)-g_{i}^{(3)}(y)\vert\leq c\vert u\vert.
%     \]
%   }\end{assumption}
% \end{comment}

\begin{assumption}\label{ass:f_g_Holder}
  For fixed $x_{i}$ $i=1,\ldots,p$, the function $f_{i}(x_{i},y)$ belongs to a H\"older class of regularity $\beta$ and constant $L$ uniformly for each $i$, i.e., $f_{i}(x_{i},\cdot)\in\mathcal{H}(\beta,L)$ where $L$ is the same for each $i$.
 % and the functions $g_{i}(x)$ for $k=1,\ldots,p$
  % $g_{i}~\in~\mathcal{H}(\beta,L)$, $k=1,\dots,p$.

  % Moreover, $f_Y(y)$ belongs to a H?lder class of smoothness
  % $\beta^\prime > \beta$ and radius $L^\prime$, i.e.
  % $f_Y\in\mathcal{H}(\beta^\prime,L^\prime)$.
\end{assumption}

\begin{rem}
  \label{rem:g_Holder}
  Note that the function $g_i$ defined in \eqref{eq:g_i} also belongs to $\mathcal{H}(\beta,L)$ for $i=1,\ldots,p$. Recall that
  \begin{equation*}
    g_{i}(y)=\int x_i f_{i}(x_i,y) dx_i.
  \end{equation*}
  Let $x_i\in\R$ be fixed. If $f(x_i,\cdot)\in\mathcal{H}(\beta,L)$, then we can prove our assertion by direct calculations. We can use a similar argument to prove that $f_Y\in\mathcal{H}(\beta,L)$.
\end{rem}

Denote $\mathcal{F}=\mathcal{F}_{\beta}(L)$ as the class of functions that fulfill Assumption \ref{ass:f_g_Holder}. In the next section, we find the rates of convergence for $\hat{\sigma}_{ij}$ depending on the parameter $\beta$.

To estimate $\hat{\sigma}_{ij}$ in Equation \ref{eq:def_estimator_Sigma_b}, we have to impose compact conditions to the marginal density $f_{Y}$.

\begin{assumption}
  \label{ass:f-small-values}
  Assume that the marginal density of $Y$ is away from 0. Namely that $0<\eta<f_{Y}(y)$ for some given constant $\eta$.
\end{assumption}

In addition, recall that the function $\hat{f}_{Y,b}(y)$ behaves exactly as $\hat{f}_{Y}$ for values greater than $b$. Below of $b$, we set $\hat{f}_{Y}$ at the fixed value $b$.

% The following Remark explains the proper conditions to choose the
% value of $b$.

\begin{rem}
  \label{ass:order_conv_h_b}
  To control the term $\hat{f}_{Y,b}$ in the denominator of equation \eqref{eq:def_estimator_Sigma_b} and ensure the convergence of $\hat{\sigma}_{ij}$, we need to control $h$ and $b$, both values converging to zero. As $n\to\infty$, set $h\sim n^{-c_1}$ and $b\sim n^{-c_2}$, where the positive numbers $c_1$ and $c_2$, which satisfy $c_1/\beta < c_2 < 1/2 -c_1$. The notation ``$\sim$'' means that two quantities have the same order of convergence.
\end{rem}

In the following, we choose a kernel function satisfying the requirements of
Assumption~\ref{ass:kernel} below.

\begin{assumption}
  \label{ass:kernel}
  The continuous kernel function $K(\cdot)$ has order $\lfloor\beta\rfloor$ if it satisfies the following conditions:
  \begin{description}
  \item{$(a)$} the support of $K(\cdot)$ is the interval $[-1,1];$
  \item{$(b)$} $K(\cdot)$ is symmetric around 0;
  \item{$(c)$} $\int_{-1}^{1}K(u)du=1$ and $\int_{-1}^{1}u^{i}K(u)du=0$ for
    $k=1,\ldots,\lfloor\beta\rfloor$.
  \end{description}
\end{assumption}

The most used kernels in the literature are the second-order ones (see \cite{tsybakov2009introduction} and \cite{hardle2004nonparametric} for general references). Also, multiplying second-order kernels by a $(\lfloor\beta\rfloor/2 -1)^{\text{th}}$ order polynomial in $u^{2}$ creates fourth-order kernels. \cite{hansen2005exact} studies explicit constructions for kernels of any order. Table \ref{tab:kernel_examples} presents the order 2 and 4 of the Uniform, Epanechnikov and Biweight kernels.

\begin{table}[htbp]
  \centering
  \begin{tabular}{ll}
    \multicolumn{2}{c}{\textbf{Second-order kernels}} \\[5pt]
    Uniform & $\displaystyle K_{0}(u) = \frac{1}{2}\ I(\vert u \vert \leq 1)$ \\[8pt]
    Epanechnikov & $\displaystyle K_{1}(u) = \frac{3}{4} \ (1-u^{2})I(\vert u \vert \leq 1)$ \\[8pt]
    Biweight & $\displaystyle K_{2}(u) = \frac{15}{16}\ (1-u^{2})^{2} I(\vert u \vert \leq
               1)$ \\[8pt]
    \multicolumn{2}{c}{\textbf{Fourth-order kernels}} \\[5pt]
    Epanechnikov & $\displaystyle K_{4,1}(u) = \frac{15}{8} \left(1-\frac{7}{3}u^{2}\right)\ K_{1}(u)$ \\[8pt]
    Biweight & $\displaystyle K_{4,2}(u) = \frac{7}{4} \left(1-3u^{2}\right)\ K_{2}(u)$ \\[8pt]
    % \multicolumn{2}{c}{\textbf{Sixth-order kernels}} \\[5pt]
    % Epanechnikov & $\displaystyle K_{6,1}(u) = \frac{175}{64} \left(1-
    %   6u^{2} +\frac{33}{5}u^{4}\right)\ K_{1}(u)$ \\[8pt]
    % Biweight & $\displaystyle K_{6,2}(u) = \frac{315}{128}
    % \left(1-\frac{22}{3}u^{2} + \frac{143}{15} u^{4}\right)\ K_{2}(u)$
    % \\[8pt]
  \end{tabular}
  \caption{Examples of kernels satisfying conditions $(a)$, $(b)$ and $(c)$. The function $I(\vert u \vert \leq 1)$ is the indicator function on the interval where $u\in [-1,1]$. }
  \label{tab:kernel_examples}
\end{table}

\subsection{Rate of convergence for the estimates of the matrix
  entries}\label{sec:upper-bound-element}

% TODO Comentario introductorio a la prueba
In this section, we derive the risk upper bound for the elementwise estimator of \eqref{eq:def_Cov_E_X|Y} defined in \eqref{eq:def_estimator_Sigma_b}.

\begin{thm}\label{thm:main_result}
  Assume that $\E\left|X_{i}\right|^{4}<\infty,\ i=1,\ldots,p$. The upper bound risk of the estimator $\hat{\sigma}_{ij}$ defined in \eqref{eq:def_estimator_Sigma_b} over the functional class $\mathcal{F}$ satisfies:
  % \marginpar{\color{red}No elbow effect without the term in C/n?}
  \begin{equation*}
    \sup_{f\in \mathcal{F}} \E[(\hat{\sigma}_{ij}-\sigma_{ij})^{2}]
    \leq C_1 h^{2\beta}
    + \frac{C_{2}\log^{4}n}{n^{2}h^{4}} + {\frac{C_{3}}{n}}.
  \end{equation*}
  In particular,
  \begin{itemize}
  \item if $\beta\geq 2$ and we choose $n^{-1/4}\leq h \leq
    n^{-1/2\beta}$, then
    \begin{equation}\label{eq:rate_conv_ij_beta_gt_2}
      \sup_{\mathcal{F}} \E[(\hat{\sigma}_{ij}-\sigma_{ij})^{2}]\leq\frac{C}{n};
    \end{equation}

  \item if $\beta<2$ and we choose $h=n^{-1/(\beta + 2)}$, then
    % \begin{equation}\label{eq:rate_conv_ij_beta_lt_2}
    % \E[(\hat{\sigma}_{ij}-\sigma_{ij})^{2}]\leq
    % Cn^{-2\beta/(\beta + 2)}\log^{4}(n).
    % \end{equation}
    \begin{equation}\label{eq:rate_conv_ij_beta_lt_2}
      \sup_{\mathcal{F}} \E[(\hat{\sigma}_{ij}-\sigma_{ij})^{2}]\leq \left(\frac{\log^2(n)}{n} \right)^{2\beta/(\beta + 2)}.
    \end{equation}
  \end{itemize}

\end{thm}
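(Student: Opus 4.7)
The overall strategy is to split
\[
\hat{\sigma}_{ij} - \sigma_{ij} = \bigl(\hat{\sigma}_{ij} - \tilde{\sigma}_{ij}^{f_Y}\bigr) + \bigl(\tilde{\sigma}_{ij}^{f_Y} - \sigma_{ij}\bigr) =: A_n + B_n,
\]
where $\tilde{\sigma}_{ij}^{f_Y}$ denotes the oracle version of $\hat{\sigma}_{ij}$ obtained by replacing $\hat{f}_{Y,b}$ with the true marginal density $f_Y$ in the denominator of \eqref{eq:def_estimator_Sigma_b}. Using $(a+b)^2\leq 2a^2+2b^2$, it suffices to bound $\E[A_n^2]$ and $\E[B_n^2]$ separately.

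For $\E[B_n^2]$ I would apply the bias-variance decomposition. The sample split, together with the leave-one-out form of $\hat{g}_i$, makes the data entering $\hat{g}_i(Y_k)$ independent of $Y_k$, so that conditioning on $Y_k$ yields $\E[\hat{g}_i(Y_k)\mid Y_k]=(g_i\ast K_{h_1})(Y_k)$. Since $g_i\in\mathcal{H}(\beta,L)$ by Remark~\ref{rem:g_Holder} and $K$ is of order $\lfloor\beta\rfloor$, the standard Taylor argument gives a conditional bias of size $O(h_1^\beta)$; combined with $\E|X_i|^4<\infty$ this delivers $(\E B_n)^2 = O(h_1^{2\beta})$. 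For the variance I would expand $\hat{g}_i(Y_k)\hat{g}_j(Y_k)$ as a double sum over $l_1,l_2\neq k$ and treat $\tilde{\sigma}_{ij}^{f_Y}$ as a V-statistic of order three; grouping contributions according to how many of the indices $k,l_1,l_2$ coincide isolates a $O(1/n_1)$ parametric term while the remaining pieces are of size $(n_1^2 h_1^4)^{-1}$. The logarithmic factor $\log^4 n_1$ appears once one invokes a uniform deviation bound of the form $\sup_y|\hat{g}_i(y)|\lesssim \log n_1$, which follows from compact support and the fourth-moment assumption via a Bernstein-type inequality in the spirit of \citet{zhu1996asymptotics}.

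For $\E[A_n^2]$ I would start from the identity
\[
\frac{1}{\hat{f}_{Y,b}^2(y)} - \frac{1}{f_Y^2(y)} = \frac{\bigl(f_Y(y)-\hat{f}_{Y,b}(y)\bigr)\bigl(f_Y(y)+\hat{f}_{Y,b}(y)\bigr)}{f_Y^2(y)\,\hat{f}_{Y,b}^2(y)},
\]
use the lower bounds $\hat{f}_{Y,b}\geq b$ and $f_Y\geq\eta$ to dominate the denominator by $(\eta b)^{-2}$, and apply the classical MISE bound $\E[(\hat{f}_Y-f_Y)^2]=O(h_2^{2\beta'}+(n_2 h_2)^{-1})$ together with Assumption~\ref{ass:f_g_Holder} on $f_Y$ to control the numerator. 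Splitting on the event $\{\hat{f}_Y\geq b\}$ and its complement, and using independence of $\hat{f}_Y$ from the first subsample (hence from $\hat{g}_i,\hat{g}_j$) to factor the expectation, lets me conclude that the calibration $h_2\sim n^{-c_1}$, $b\sim n^{-c_2}$ with $c_1/\beta<c_2<1/2-c_1$ from Remark~\ref{ass:order_conv_h_b} makes $\E[A_n^2]$ asymptotically negligible with respect to both $h_1^{2\beta}$ and $\log^4 n_1/(n_1^2 h_1^4)$.

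The main obstacle will be the variance of $B_n$: the V-statistic structure with a bandwidth-dependent kernel requires careful index-coincidence bookkeeping, and a crude uniform bound on the kernel estimator (providing the $\log^4 n_1$ factor) must be combined with sharp moment bounds on the ``diagonal'' terms to recover exactly the $(n_1^2 h_1^4)^{-1}$ scaling. Once the bound $C_1 h_1^{2\beta}+C_2\log^4 n_1/(n_1^2 h_1^4)$ is established, \eqref{eq:rate_conv_ij_beta_gt_2} and \eqref{eq:rate_conv_ij_beta_lt_2} follow by direct substitution of the prescribed $h_1$ and elementary algebra, the log factor being absorbed into the constant for $\beta\geq 2$ and combining with the polynomial rate for $\beta<2$.
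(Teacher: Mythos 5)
Your proposal follows essentially the same route as the paper: the paper likewise isolates the plug-in error from $\hat{f}_{Y,b}$ (multiplicatively in the bias step and via the term $J_n$ in the variance step, killed by the calibration of $h_2$ and $b$ in Remark~\ref{ass:order_conv_h_b}), reduces the oracle part to a U/V-statistic whose diagonal and off-diagonal index coincidences yield the $1/n_1$ and Taylor/H\"older $h_1^{2\beta}$ terms, and obtains the $\log^4 n_1/(n_1^2h_1^4)$ contribution from the quadratic term in $\hat{g}_i-g_i$ via the uniform deviation bound $\sup_y|\hat{g}_i(y)-g_i(y)|=O_p(h_1^{\beta}+\log n_1/(n_1^{1/2}h_1))$. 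The only slips are cosmetic: the uniform bound you need is on the deviation of $\hat{g}_i$, not on $\sup_y|\hat{g}_i(y)|$ itself, and the diagonal ($l_1=l_2$) coincidence also contributes a $C/(n_1h_1)$ bias term, which is harmlessly absorbed since $1/(n_1^2h_1^2)\leq \log^4 n_1/(n_1^2h_1^4)$.
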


We postponed the proof for this theorem until the Appendix.

The results in Theorem~\ref{thm:main_result} show that the rate of convergence exhibits an ``elbow'' effect. This means we can recover a parametric rate for regular enough functions $f_Y(y)$ and $g_i(y) \ i=1,\ldots,p$. Otherwise, the mean squared error has a slower rate depending on the regularity of the functional class $\mathcal{F}$. This behaviour is common in functional estimation, e.g., see \cite{donoho1996density}.
% The previous theorem basically means that if the density $f$ and the
% conditionals $\hat{g}_i$ are smooth enough, then the nonparametric
% estimator of $\sigma_{ij}$ has a parametric rate of convergence.
% Otherwise, we get a rate depending on the regularity from the H?lder
% class with a logarithmic term.
% In other words, we solve the semiparametric estimation problem as soon
% as the unknown functions $f_Y(y)$ and $g_i(y)$ for $i=1,\ldots,p$ are
% regular enough.

Thus, we have provided the rates of convergence for any $\beta>0$ and a $n$-consistency for $\beta\geq2$. In addition, our results agree with those of \cite{zhu1996asymptotics}, who showed the $n$-consistency of the mean squared error by assuming a regularity of $\beta=4$, supporting our method.

In practice, the construction of our estimator requires the choice of the bandwidth $h$. We stress that if the density is smooth enough ($\beta>2$), $h$ can be chosen as $h \sim n^{-1/4}$. In such case, our estimator is adaptive. On the other hand, $h$ should take into acount the value of $\beta$. Ideally, one may need the construction of a adaptation strategy (see, e.g.~\cite{Lepski1997}). However this is not the purpose of the represent contribution.

Now, under some mild conditions, it appears to be natural to investigate
the rate of convergence for the whole matrix estimator
$\hat{\Sigma}=(\hat{\sigma}_{ij})$.
In the next section, we extend the result of Theorem
\ref{thm:main_result} to find the rate of convergence for $\hat{\Sigma}$
under the Frobenius norm.
% TODO agregar otros comentarios sobre el resultado

\subsection{Estimation of the conditional matrix}
\label{sec:consistency-under-frob-norm}

We have got upper bounds for the quadratic risk related to the estimation of each coefficient for the matrix $\Sigma = \C(\E[\boldsymbol{X} \vert Y]) = (\sigma_{ij})_{i,j=1..p}$. Although it is not the main purpose of the paper, we can extend the study to the whole matrix $\Sigma$.

% We have estimated them with $\hat{\Sigma} = (\hat{\sigma}_{ij})$ where
% $\hat{\sigma}_{ij}$ is defined in equation
% \eqref{eq:def_estimator_Sigma_b}.

In this paper, the performances of an estimator of $\Sigma$ will be
measured with the Frobenius norm defined as:
\begin{defn}
  Define the Frobenius norm of a matrix $A = (a_{ij})_{p \times p}$ as the
  $\ell^2$ vector norm of all entries in the matrix, where
  \begin{equation*}
    \Vert A \Vert_F^2 = \sum_{i,j} a_{ij}^2.
  \end{equation*}

  % In other words, this is equivalent to consider the matrix $A$ as a
  % vector
  % of length $p^2$.
\end{defn}

The use of other norms like the operator norm, is out of scope for this
work. Basically the technique to control the matrix with this norm is
bounding specific sub-block matrices of $\Sigma$. This development rest
for a future study

Define the estimator $\hat \Sigma= (\hat{\sigma}_{ij})$, where the values
of $\hat{\sigma}_{ij}$ come from Equation
\eqref{eq:def_estimator_Sigma_b}.
According to Theorem \ref{thm:main_result}, we could bound the mean
squared error over the normalized Frobenius norm $\hat{\Sigma}$ and
$\Sigma$ by
\begin{equation*}
  \sup_{f\in\mathcal{F}}\frac{1}{p}\E \Vert \hat{\Sigma}
  -\Sigma \Vert_F^2 \leq \frac{p}{n},
\end{equation*}
when $\beta\geq 2$.

The estimator $\hat{\Sigma}$ turns inconsistent when $p\gg n$; behavior
already explored.
We can cite studies such that: \citet{Muirhead1987},
\citet{johnstone2001distribution}, \citet{bickel2008regularized,
  bickel2008covariance}, and \citet{fan2008high}.

To avoid consistency problems, we establish additional assumptions on the covariance matrix, thus we consider a different version of $\hat{\Sigma}$. We get this modification by setting the coefficients of the matrix to zero from some point. \citet{bickel2008regularized} refer to this transformation as ``banding.''

% This modified estimate of $\hat{\Sigma}$ is built by decreasing the
% coefficients of the matrix as they move away from the diagonal.
% This technique is also called ``banding'' and was studied by for
% instance.
% Recall that we want to estimate $\Sigma = \C(\E[X \vert Y]) =
% (\sigma_{ij})$.
% In Section \ref{sec:Methodology}, we developed the estimate
% \[
%   \hat{\sigma}_{ij} = \frac{1}{n} \sum_{k=1}^{n} \frac{\hat{g}_{i}(Y_{k})
%   \hat{g}_{j}(Y_{k})}{\hat{f}_{\eta,Y}^{2}(Y_{k})}
% \]
% where $\hat{g}_{i}$, $\hat{g}_{j}$ and $\hat{f}$ were defined
% previously.

% In large-dimensional settings, the simpler case of the sample
% covariance matrix $\sum_{k=1}^n \boldsymbol{X}_k\boldsymbol{X}_k^\top$
% often performs poorly.
% See, for example, \citet{muirhead1987developments},
% \citet{johnstone2001distribution},
% \citet{bickel2008covariance,bickel2008regularized} and
% \citet{fan2008high}.
% In our context these issues are similar and it is necessary regularize
% our estimator.

For an integer $m$ with $1\leq m \leq p$, we define the banding
estimator of $\hat{\Sigma}$ as
\begin{equation}\label{eq:tapering_cov_cond}
  \hat{\Sigma}_m = (w_{ij}\hat{\sigma}_{ij})_{p \times p},
\end{equation}
% where $ \hat{\sigma}_{ij} $ was defined in
% Section~\ref{sec:Methodology} and
where \citeauthor{bickel2008regularized} define the function $w_{ij}$ as
\begin{equation*}
  w_{ij}=
  \begin{cases}
    1, & \text{when } \vert i - j \vert \leq m,\\
    0, & \text{otherwise.}
  \end{cases}
\end{equation*}
% \[
%   w_{ij}=
%   \begin{cases}
%     1, & \text{when } \vert i - j \vert \leq m_2,\\
%     2 - \displaystyle{\frac{\vert i - j \vert}{m_2}} , & \text{when }
%     m_2 \leq \vert i - j \vert \leq m, \\
%     0, & \text{otherwise,}
%   \end{cases}
% \]
% and $m_2 = m/2$.
% Without loss of generality we assume that $m$ is even.

% This technique is used by \citet{cai2010optimal} to study the minimax
% optimal rate for the empirical covariance matrix.

If $p\gg n$, we require that $\Sigma$ belongs to some space with smooth
decay in its coefficients.
% Otherwise, it is not possible to ensure the consistency of
% $\hat{\Sigma}$.
The next assumption fixes $\Sigma$ in a subset of the definite positive
matrices.
\begin{assumption}\label{ass:hypothesis_space_Frob}
  The positive-definite covariance matrix $\Sigma$ belongs to the
  following parameter space:
  % Assume that it is positive definite and belonging to $
  % \mathcal{G}_\alpha$ where
  \begin{multline}
    \label{eq:G_alpha_decay_cov}
    \mathcal{G}_\alpha %
    = \mathcal{G}_\alpha(M_0,M_{1})\nonumber \\
    = \{ \Sigma : \vert \sigma_{ij} \vert \leq M_1 \vert i - j
    \vert^{-(\alpha + 1)} \text{\ for\ } i \neq j \text{ and }
    \lambda_\text{max}(\Sigma) \leq M_0\},
  \end{multline}
  where $\lambda_\text{max}(\Sigma)$ is the maximum eigenvalue of the
  matrix $\Sigma$, $M_0>0$, and $M_1>0$.
\end{assumption}

\begin{notation}
  \label{not:class_G_prime}
  Set $\mathcal{G}^\prime=\mathcal{G}^\prime_{\alpha,\beta}(L)$ as the
  functional class formed by the intersection between
  $\mathcal{F}_\beta(L)$ and $\mathcal{G}_\alpha$.
\end{notation}

In our case, Assumption \ref{ass:hypothesis_space_Frob} defines a matrix
space indexed by a regularity parameter $\alpha$.
This parameter $\alpha$ defines a rate of decay for the coefficients of
the conditional covariance as they move away from the diagonal %
\citet{bickel2008regularized} and \citet{cai2010optimal} discuss in
detail the assumption.

The space $\mathcal{G}_\alpha$ depends on the inherent matrix structure
of the problem.
In the literature we can find alternatives types of matrix structures
and spaces, for example: \cite{cai2013toeplitz} studied a Toeplitz
structure for the covariance matrix; \cite{xiao2014theoretic} worked
with a similar matrix structure as ours, but with an alternative way to
decrease the coefficients out of the diagonal; and \cite{cai2012sparse}
used a sparse configuration to estimate optimal rates for the estimated
covariance.

Also, other authors generalize the banding technique used before.
We can use thresholding which removes the ``small'' coefficients of the
matrix (see \cite{cai2012adaptive}), tapering which decreases the
elements off the diagonal with a linear function (see
\cite{cai2010optimal}), or a nested lasso which penalize the
coefficients according their assumptions (see \cite{levina2008sparse}),
to cite some examples.

% We provide an upper bound for the nonparametrical estimate of the
% $\Sigma$ under the Frobenius norm.
The following theorem provides an upper bound of the convergence rate
for the estimate defined in \eqref{eq:tapering_cov_cond} under the
normalized Frobenius norm based on the sample $\{(\boldsymbol{X}_1,Y_1),
\ldots, (\boldsymbol{X}_n,Y_n)\}$.

\begin{thm}\label{thm:conv-frob-norm}
  Assume that $\E\left|X_{i}\right|^{4}<\infty,\ k=1,\ldots,p$.
  The estimator $\hat{\Sigma} = (\hat{\sigma}_{ij})$ defined in
  \eqref{eq:def_estimator_Sigma_b} over the functional class
  $\mathcal{G}^\prime$ satisfies the following.

  \begin{itemize}
  \item If $\beta \geq 2$,
    \begin{equation}
      \label{eq:thm2_beta_greater_2}
      \sup_{\mathcal{G}^\prime }\frac{1}{p}\E \Vert \hat{\Sigma}_m
      -\Sigma \Vert_F^2 \leq \min \left\{n^{-\textstyle\frac{2\alpha +
            1}{2(\alpha + 1)} }, \frac{p}{n} \right\},
    \end{equation}
    where $m=p$ if $n^{1/(2(\alpha+1))}>p$ or $m=n^{1/(2(\alpha+1))}$
    otherwise.
  \item If $\beta < 2$,
    \begin{multline}
      \label{eq:thm2_beta_less_2}
      \sup_{\mathcal{G}^\prime } \frac{1}{p} \E \Vert \hat{\Sigma}_m
      -\Sigma \Vert_F \leq \min \left\{ \left(\frac{\log^2n} {n}
        \right)^{ \textstyle\frac{2\beta(2\alpha+1)} {(\beta +
            2)(2(\alpha + 1))}}, p\left(\frac{\log^2
            n}{n}\right)^{\textstyle\frac{2\beta}{\beta + 2}}\right\},
    \end{multline}
    where $m=p$ if
    $(\log^2n/n)^{-2\beta/(2(\alpha+1)(\beta+2))}>p$ or \\
    $m=(\log^2n/n)^{-2\beta/(2(\alpha+1)(\beta+2))}$ otherwise.
  \end{itemize}
\end{thm}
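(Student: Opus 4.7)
The plan is to carry out a bias--variance decomposition at the level of the matrix, exploiting the fact that the banding operator is deterministic and acts coefficient by coefficient. I would begin by writing
\begin{equation*}
\|\hat{\Sigma}_m - \Sigma\|_F^2 = \sum_{|i-j|\leq m}(\hat{\sigma}_{ij}-\sigma_{ij})^2 + \sum_{|i-j|>m}\sigma_{ij}^2,
\end{equation*}
which splits the problem into a stochastic part (the entries that are actually estimated) and a purely deterministic truncation bias (the entries that are set to zero by the band).

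For the first sum, the number of pairs $(i,j)$ with $|i-j|\leq m$ is at most $p(2m+1)$, and Theorem~\ref{thm:main_result} delivers a uniform upper bound $r_{n_1}$ on $\E[(\hat{\sigma}_{ij}-\sigma_{ij})^2]$ over the class $\mathcal{F}$, hence over $\mathcal{G}'\subset\mathcal{F}$, uniformly in $(i,j)$, since the constants produced by Lemmas~\ref{lem:Bias} and~\ref{lem:Var} depend only on $\beta$, $L$ and the kernel, and $h_1$ can be chosen once for all indices. This yields
\begin{equation*}
\E\sum_{|i-j|\leq m}(\hat{\sigma}_{ij}-\sigma_{ij})^2 \leq p(2m+1)\, r_{n_1},
\end{equation*}
with $r_{n_1}=C/n_1$ when $\beta\geq 2$ and $r_{n_1}=C(\log^2 n_1/n_1)^{2\beta/(\beta+2)}$ when $\beta<2$. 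For the truncation sum, Assumption~\ref{ass:hypothesis_space_Frob} gives $|\sigma_{ij}|\leq M_1|i-j|^{-(\alpha+1)}$, so grouping by fixed $i$ and using $\sum_{k>m}k^{-2(\alpha+1)}\leq C\,m^{-(2\alpha+1)}$ produces
\begin{equation*}
\sum_{|i-j|>m}\sigma_{ij}^2 \leq C\, p\, m^{-(2\alpha+1)}.
\end{equation*}

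After dividing by $p$, the normalized Frobenius risk is controlled by $(2m+1)\,r_{n_1}+C\,m^{-(2\alpha+1)}$. I would then optimize in $m$ by balancing $m\,r_{n_1}$ against $m^{-(2\alpha+1)}$, which gives $m^{\ast}\sim r_{n_1}^{-1/(2(\alpha+1))}$ and an overall rate of order $r_{n_1}^{(2\alpha+1)/(2(\alpha+1))}$. Substituting the two expressions for $r_{n_1}$ reproduces the first argument of the $\min$ in~\eqref{eq:thm2_beta_greater_2} and~\eqref{eq:thm2_beta_less_2}, together with the values of $m$ announced in the statement. When the ideal $m^{\ast}$ exceeds $p$, one simply takes $m=p$: the truncation sum then vanishes identically, the stochastic sum is bounded by $p\cdot r_{n_1}$, and dividing by $p$ produces the alternative rates $p/n_1$ and $p(\log^2 n_1/n_1)^{2\beta/(\beta+2)}$ that form the second argument of the $\min$.

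The argument is essentially a clean combination of Theorem~\ref{thm:main_result} with a standard banding/bias--variance trade-off of the type used in \citet{bickel2008regularized}, so I do not expect any deep obstacle. The one point requiring a little care is the verification that the pointwise rate $r_{n_1}$ of Theorem~\ref{thm:main_result} is genuinely uniform in $(i,j)$ and over the whole class $\mathcal{G}'$; this is bookkeeping, since the constants in Lemmas~\ref{lem:Bias}--\ref{lem:Var} depend only on the smoothness index, the H\"older radius and the kernel, and $\mathcal{G}'$ inherits the regularity of $\mathcal{F}$ by definition.
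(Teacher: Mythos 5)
Your proposal is correct and follows essentially the same route as the paper: the same split of the Frobenius risk into a banded stochastic part (of order $mp$ entries, each controlled uniformly by Theorem~\ref{thm:main_result}) plus the deterministic tail $\sum_{|i-j|>m}\sigma_{ij}^2\leq C\,p\,m^{-(2\alpha+1)}$ from Assumption~\ref{ass:hypothesis_space_Frob}, followed by the same optimization $m\sim\gamma_{n_1}^{-1/(2(\alpha+1))}$ capped at $p$. The only nit is a bookkeeping slip in the $m=p$ case: the unnormalized stochastic sum is of order $p^{2}r_{n_1}$ (not $p\,r_{n_1}$), which after dividing by $p$ yields the stated $p\,r_{n_1}$, i.e. $p/n_1$.
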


The minimum reached in Equations \eqref{eq:thm2_beta_greater_2} and
\eqref{eq:thm2_beta_less_2} differs according to the value of the ratio
$p/n$.
If $p\ll n$, we use the original covariance matrix $\hat{\Sigma}$.
Otherwise, it is necessary to regularize the estimator to maintain the
consistency.
For example, in the case where $\beta\geq 2$, if
$n^{1/(2(\alpha+1))}>p$, then we are in a low-dimensional framework and
we use the full matrix $\hat{\Sigma}$.
In other cases, when $n^{1/(2(\alpha+1))}\leq p$, the regularization of
the matrix is mandatory and we choose $m=n^{1/(2(\alpha+1))}$ to
generate the matrix $\hat{\Sigma}_m$ defined in
\eqref{eq:tapering_cov_cond}.
A similar analysis can be performed if $\beta < 2$.

\section{Sliced Inverse Regression for time series}
\label{sec:sir-time-series}

In the Introduction we mentioned the link between the SIR method and time series process. The auto-regressive time series have an autocovariance matrix which decreases if the lag between times gets larger. Therefore, events that occurred in far in the past, do not affect the outcome in the present. This behavior fit the choose of the banding estimator in \eqref{eq:tapering_cov_cond}.

The rest of the section is to present the general framework to use SIR for time series models.

Consider a multivariate time series $\{\bm{X}_{t} = (X_{1,t}, \ldots, X_{m,t})\colon t=1,\ldots, T\}$. The classic vector auto-regressive with $p$ lags (VAR(p)) describes how to interact the past and the present of certain variable, through some linear model. We write this models as
\begin{equation*}
  \bm{X}_{t} = \bm{\Phi}_{1}\bm{X}_{t-1} + \ldots + \bm{\Phi}_{p}\bm{X}_{t-p} + \bm{\varepsilon}_{t}
\end{equation*}
where $\bm{\Phi}_{1}, \ldots, \bm{\Phi}_{p}\in \mathbb{R}^{m\times m}$ are
matrices and $\bm{\varepsilon}_{t}$ is white noise process.

However, applications in economy, medicine or social sciences need
complex (and sometimes unknown) structures to model the time series.
According to \cite{Tong1993}, we can write a multivariate nonlinear
autoregresive model of order $p$ as
\begin{equation*}
  \bm{X}_{t} = h(\bm{X}_{t-1}, \ldots, \bm{X}_{t-p}, \varepsilon_{t}).
\end{equation*}

Here, the function $h:\R^{(p+1)m} \to \R^{m}$ relates the past and present
of the process through a nonlinear process. One key assumption is that
the influence of $Y$ does not affect the behavior of $\bm{X}$ and
$\varepsilon_{t}$ is uncorrelated with all past explanatory variables
(i.e. $\bm{X}_{j}$ with $j\leq t-1$).

\citet{Becker2002} studied this process by focusing on the influence of
$\bm{X}$ to $Y$. They suggest a model (possibly nonlinear) with a link
function
\begin{equation*}
  Y_{t} = F(\bm{X}_{t}, Y_{t-1}, \bm{X}_{t-1}, \ldots, Y_{t-p},
  \bm{X}_{t-p}, \varepsilon_{t})
\end{equation*}
where $\varepsilon_{t}$ is again a white noise.

In the SIR framework, $(\bm{X}_{t}, Y_{t})$ correspond to the observation
$(\bm{X}, Y)$ in Equation \eqref{eq:sir}. To adapt this problem to the
SIR framework, \citeauthor{Becker2002} suggest to search the effective
dimension directions using a modified version of Equation
\eqref{eq:sir-simplified},
\begin{equation*}
  \label{eq:Y-VAR}
  Y_{t} = \varphi\left( \nu_{1}^{\top} \widetilde{\boldsymbol{X}}_{t},
    \ldots, \nu_{k}^{\top}\widetilde{\boldsymbol{X}}_{t}, \varepsilon
  \right)\quad \forall t,
\end{equation*}
where $\widetilde{\boldsymbol{X}}_{t} = (\bm{X}_{t}, Y_{t-1}, \bm{X}_{t-1},
\ldots, Y_{t-p}, \bm{X}_{t-p})$ and $k\ll p$.

Therefore, we apply the SIR method to a the sample
$(\widetilde{\boldsymbol{X}}_{t}, Y_{t})$ with a number $p$ of lags
chosen empirically. Specifically, the main goal in this case is to
estimate $\C(\E[\widetilde{\boldsymbol{X}}_{t} \vert Y_{t}])$.

This procedure finds the minimun number of lags where the model could
capture all the relevant information.

Theorem \ref{thm:conv-frob-norm} claims that if $F$ is H\"older with
$\beta\geq 2$ and other conditions, then the nonparametric estimator
\eqref{eq:tapering_cov_cond} converge to $\C(\E[\boldsymbol{X}\vert Y])$
asymptotically at speed $n^{-1}$. As a consequence, the eigenvectors of
$\hat{\Sigma}_m$ correctly estimate the effective dimension reduction
directions. Thus, we have an accurate characterization of the effective
dimension reduction space spanned by the eigenvectors of
$\hat{\Sigma}_m$. Equation \ref{eq:tapering_cov_cond} estimates the
$\C(\E[\bm{X}\vert Y])$ assuming that the covariance belongs to a class
$\mathcal{F}$ with decreasing coefficients which fits in this framework
given the model's autoregresive nature.

\section{Conclusion}\label{sec:Conclusions}

In this study, we investigated the convergence rate of a nonparametric estimator for the conditional covariance $\C(\E[X \vert Y])$. First, we studied the nonparametric behavior of each element of the matrix based on the study of \citet{zhu1996asymptotics}. This approach allowed us to exhibit rates of convergence according to the smoothness of the target. In particular we showed that if the model was regular enough, it achieved a parametric rate of $1/n$. Otherwise, we got a slower rate $(\log^2(n)/n)^{2\beta/(\beta+2)}$, depending on the regularity, $\beta$, of a certain H\"{o}lder class.

As a natural extension, we studied how performs the mean squared risk of $\hat{\Sigma}$ under the Frobenius norm. To maintain consistency and avoid issues due to the high dimensionality of data, the matrix $\hat{\Sigma}$ had to be regularized. We used a regularized version of $\hat{\Sigma}$ called $\hat{\Sigma}_m$, which we got by a Schur multiplication between $\hat{\Sigma}$ and a positive definite matrix of weights. These weights remained 1 until some point away from the diagonal and zero elsewhere.

This method could not make sure the positive definiteness of the estimate. We proved that under some mild conditions, our estimator was consistent given that either $p/n$ or $p(\log^2(n)/n)^{2\beta/(\beta+2)}$ zeroed, so the estimator would be positive definite with a probability tending to 1. \citet{cai2010optimal} suggested project $\hat{\Sigma}$ into the space of positive-semidefinite matrices under the operator norm. In simple terms, first diagonalizing $\hat{\Sigma}$ and then replacing the negative eigenvalues by 0, where the matrix got is semidefinite positive.

% Other norms could be used to estimate the squared risk---for example
% the operator norm.
Other norms are available to measure the error in matrix estimation, such as the operator norm.
% does not seem suitable in our case because the computation of the risk
% is based in the performance of $\hat{\Sigma}_{ij}$.
% In this context, it seems more natural to use the Frobenius norm.
However, this approach requires concentration inequalities for some specific matrices blocks of $\hat{\Sigma}$, which are not readily available. However, the operator norm in the nonparametric estimation of $\Sigma$ is worth investigating.
% and will be considered in a future paper.

One application to time series was presented, showing the model capabilities. The immediate pending task will be to implement these ideas in a programming language and show its performance.

% \vspace{2em}

% \noindent{Maikol Sol\'is, 118 Route de Narbonne, Institute de
% Math\'ematiques de Toulouse, 31062
% Toulouse Cedex 9, France.\\
% E-mail: msolisch@math.univ-toulouse.fr, maikol.solis@ucr.ac.cr.}

\section{Appendix}\label{sec:Appendix}

\subsection{Technical lemmas}%\label{sec:technical-lemmas}

\begin{lem}\label{lem:Bias}
  Let $\B(\hat{\sigma}_{ij}) = \vert \E(\hat{\sigma}_{ij} - \sigma_{ij})\vert$. Then, under the same assumptions as Theorem \ref{thm:main_result} and supposing that $nh\to 0$ as $n\to \infty$, we have

   \begin{equation*}
     \B^{2}(\hat{\sigma}_{ij})\leq C_{1}h^{2\beta}
     + \frac{C_{2}}{n^{2}h^{2}} + \frac{1}{n^2}
   \end{equation*}

   for $C_{1}$ and $C_{2}$, which are positive constants that depend only
   on $L$, $s$, $\beta$, and the kernel $K$.
 \end{lem}

 \begin{lem}\label{lem:Var}
   Under the same assumptions as Theorem \ref{thm:main_result}, if we
   suppose that $nh\to 0$ as $n\to \infty$, then we have

  \begin{equation*}
    \V\left(\hat{\sigma}_{ij}\right)\leq
    C_1 h^{2\beta} + \frac{C_{2}\log^{4}n}{n^{2}h^{4}}
    + \frac{1}{n}
  \end{equation*}

  for $C_{1}$, $C_{2}$, and $C_{3}$, which are positive constants that
  depend only on $L$, $s$, $\beta$, and the kernel $K$.
\end{lem}

\begin{lem}[cf. \citet{prakasa1983nonparametric}, Theorem
 2.1.8]\label{lem:f_hat_conv_rate}
 Suppose that $K$ is a kernel of order
 $s=\left\lfloor\beta\right\rfloor$ and that Assumption
 \ref{ass:f_g_Holder} is satisfied.
 Then,
 \begin{equation*}
   \sup_{y}\vert\hat{f}(y)-f(y)\vert=O\left(h^{\beta} + \frac{\log
       n}{n^{1/2}h}\right) \quad a.s.
 \end{equation*}
\end{lem}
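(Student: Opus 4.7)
The plan is to follow the classical approach for uniform consistency of the kernel density estimator, since the statement is exactly Prakasa Rao's Theorem 2.1.8 (cited). The argument splits into two pieces via the standard bias–variance decomposition
\[
\hat{f}(y) - f(y) = \bigl(\hat{f}(y)-\E\hat{f}(y)\bigr) + \bigl(\E\hat{f}(y)-f(y)\bigr),
\]
and one controls each term uniformly in $y$.

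For the bias term, I would use a change of variables to write $\E\hat{f}(y) - f(y) = \int K(v)\bigl[f(y-hv)-f(y)\bigr]\,dv$ and Taylor-expand $f$ around $y$ up to order $s = \lfloor\beta\rfloor$. Because $K$ is a kernel of order $s$, conditions $(a)$–$(c)$ annihilate all polynomial terms of degree $1,\dots,s$, leaving only the remainder. The Hölder assumption $f\in\mathcal{H}(\beta,L)$ then bounds the remainder by $L|hv|^\beta$, and integrating against $|K(v)|$ on $[-1,1]$ yields $\sup_y|\E\hat{f}(y)-f(y)| \le C\, h^\beta$ with a constant depending only on $L,\beta$ and $K$.

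The harder and more technical piece is the stochastic fluctuation $\sup_y|\hat{f}(y)-\E\hat{f}(y)|$. The plan is a grid-plus-concentration argument: since $f$ has compact support $[-M,M]$, place a regular grid $y_1,\dots,y_{N_n}$ of spacing $\delta_n$ that is polynomial in $n$, and write
\[
\sup_y |\hat{f}(y)-\E\hat{f}(y)| \;\le\; \max_{1\le k\le N_n} |\hat{f}(y_k)-\E\hat{f}(y_k)| + \sup_{|y-y_k|\le\delta_n}\bigl|\hat{f}(y)-\hat{f}(y_k)-(\E\hat{f}(y)-\E\hat{f}(y_k))\bigr|.
\]
The interpolation term is handled by the Lipschitz (or bounded variation) property of $K$, giving an error of order $\delta_n/h^2$, which is forced to be negligible by choosing $\delta_n$ small enough. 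For each grid point, $\hat{f}(y_k)$ is an average of i.i.d.\ bounded variables with variance of order $1/(nh)$, so Bernstein's inequality gives an exponential tail; a union bound over the $N_n$ points introduces the $\log n$ factor.

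The main obstacle is getting the stated order $\log n/(n^{1/2}h)$ rather than the more familiar $\sqrt{\log n/(nh)}$: this comes from trading off the Bernstein deviation against the grid spacing and absorbing the kernel Lipschitz constant $1/h$, and from verifying that in the present regime (bandwidth $h=h_2\sim n^{-c_1}$ as fixed in Remark \ref{ass:order_conv_h_b}) the linear-in-variance Bernstein term dominates. Since the statement is quoted verbatim from Prakasa Rao, I would carry out the two decompositions, invoke the kernel-order cancellation for the bias, and then defer the concentration step to the reference rather than re-deriving the exponential inequality in full.
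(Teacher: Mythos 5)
The paper gives no proof of this lemma at all---it is imported verbatim from Prakasa Rao (1983, Theorem 2.1.8)---and your sketch is exactly the standard argument behind that result: kernel-order cancellation plus the H\"older remainder for the bias, and a grid/Bernstein/union-bound argument for the stochastic fluctuation, with the final concentration step deferred to the reference, so you are on the same footing as the paper. One small correction: the stated rate $\log n/(n^{1/2}h)$ is \emph{weaker} than the sharp classical rate $\sqrt{\log n/(nh)}$ (their ratio is $(\log n)^{1/2}h^{-1/2}\to\infty$ as $h\to 0$), so the ``main obstacle'' you describe is not an obstacle at all---the claimed bound follows immediately from the sharper standard one.
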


The following lemma is a modified version of Theorem 2.37 from
\citet{pollard1984convergence}.

% The following lemma means basically that with high probability the
% term $\vert\hat{g}_{i}(y)-\E\left[\hat{g}_{i}(y)\right]\vert$ is
% asymptotically small.
\begin{lem}\label{lem:g_hat_conv_rate}
  Suppose that $K$ is a kernel of order
  $s=\left\lfloor\beta\right\rfloor$, $\E[X_i^4]<\infty$ and that
  Assumption \ref{ass:f_g_Holder} is satisfied.
  For any $\varepsilon>0$,
  \begin{multline*}
    \P\left(\sup_{y}\vert\hat{g}_{i}(y) -
      \E\left[\hat{g}_{i}(y)\right]\vert > 8n^{-1/2}h^{-1}\varepsilon\right) \\
    \leq 2 c\left(\frac{\varepsilon}{\sqrt{n}d}\right)^{-4}\exp\left\{
      -\frac{1}{2}\varepsilon^{2}/\left(32d(\log n)^{1/2}\right)\right\} \\
    + 8cd^{-8}\exp\left(-nd^{2}\right) +
    \E\left[X_{i}^{4}\right]I\left(\vert X_{i}\vert>cd^{-1/2}(\log
      n)^{1/4}\right),
  \end{multline*}
  where
  \[
    d\geq\sup_{y}\left\{
      \V\left(K\left(\frac{y-Y}{h}\right)\right)\right\} ^{1/2}
  \].

\end{lem}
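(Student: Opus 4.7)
The plan is to prove this lemma by a truncation argument combined with a direct application of Pollard's Theorem 2.37 to the resulting bounded summands. The estimator $\hat{g}_i(y) - \E[\hat{g}_i(y)]$ is an empirical process indexed by $y$ with summands $Z_l(y) = (nh)^{-1}X_{il}K((y-Y_l)/h)$, so the main difficulty is that $X_{il}$ is not bounded; only a fourth moment is assumed. The idea is to isolate the contribution of large values of $|X_{il}|$ via truncation, use Pollard's maximal inequality on the (now bounded) truncated process, and recombine.

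First I would set the truncation level $M_n := c\, d^{-1/2}(\log n)^{1/4}$ and write $X_{il}=X_{il}^{\leq}+X_{il}^{>}$, where $X_{il}^{\leq}=X_{il}\mathbf 1\{|X_{il}|\le M_n\}$ and $X_{il}^{>}=X_{il}\mathbf 1\{|X_{il}|>M_n\}$. This induces a decomposition $\hat g_i(y)-\E[\hat g_i(y)]=T_n(y)+R_n(y)$ where $T_n$ uses only $X_{il}^{\leq}$ and $R_n$ uses only $X_{il}^{>}$. A union bound then splits the probability of $\{\sup_y|\hat g_i(y)-\E[\hat g_i(y)]|>8n^{-1/2}h^{-1}\varepsilon\}$ into a contribution from $T_n$ and one from $R_n$. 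The $R_n$ contribution is controlled via Markov's inequality applied to the fourth moment, giving precisely the term $\E[X_i^{4}]\mathbf 1(|X_i|>cd^{-1/2}(\log n)^{1/4})$ stated in the conclusion (after using i.i.d.\ replication and the fact that $\{|X_{il}|>M_n\}$ occurs with probability controlled by the fourth-moment tail).

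For the truncated part $T_n(y)$ the summands are now bounded by $C M_n/(nh)$ and have variance at most $C d^2 M_n^2/(nh)^2$ per observation (using the definition of $d$ as a uniform bound on the variance of $K((y-Y)/h)$). The function class $\{K((y-\cdot)/h):y\in\R\}$ is pointwise-separable and, since $K$ has bounded support and finite variation, VC-subgraph with envelope $\|K\|_\infty$; its covering numbers are polynomial. Pollard's Theorem 2.37 applied to this bounded empirical process, with a chaining argument based on Bernstein's inequality, delivers the two exponential terms $2c(\varepsilon/\sqrt n d)^{-4}\exp\{-\tfrac12\varepsilon^2/(32 d(\log n)^{1/2})\}$ and $8cd^{-8}\exp(-nd^2)$. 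Adding the three contributions and taking the union bound finishes the proof.

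The main obstacle is the bookkeeping in the chaining step: one must choose the truncation level $M_n$ so that $M_n$ enters the Bernstein exponent in exactly the way needed to balance the polynomial prefactor from the covering numbers with the exponential decay, and so that the Lipschitz constant of the class $\{K((y-\cdot)/h):y\in\R\}$ produces the explicit constants that appear in Pollard's statement. Once $M_n = cd^{-1/2}(\log n)^{1/4}$ is fixed, the Bernstein exponent becomes of the order $\varepsilon^2/(d(\log n)^{1/2})$ up to the stated numerical constants, and the scaling by $8 n^{-1/2}h^{-1}\varepsilon$ in the event is exactly what makes the truncated summands enter the inequality with unit scale. The rest of the proof is a routine application of Pollard's chaining lemma.
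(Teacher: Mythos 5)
The paper does not actually prove this lemma: immediately after the statement it says ``We refer to \citet{zhu1996asymptotics} for the proof of Lemma \ref{lem:g_hat_conv_rate},'' and only remarks that the result is a modified version of Theorem 2.37 of \citet{pollard1984convergence}. Your outline --- truncate $X_{il}$ at $M_n=cd^{-1/2}(\log n)^{1/4}$, control the tail part through the fourth moment, and apply Pollard's maximal inequality to the bounded truncated empirical process indexed by $\{K((y-\cdot)/h)\}$ --- is precisely the strategy used in the cited source, so in spirit you are following the same route the paper points to. Two caveats: as written this is a sketch rather than a proof, since you do not actually derive the specific form of the three terms (in particular, a union bound plus Markov on the event $\{\max_l|X_{il}|>M_n\}$ produces a factor of order $nM_n^{-4}$ in front of the truncated fourth moment, and you do not show how that factor is absorbed to yield exactly $\E[X_i^4]I(|X_i|>cd^{-1/2}(\log n)^{1/4})$, nor where the separate $8cd^{-8}\exp(-nd^2)$ term originates in Pollard's argument); and your description of Pollard's Theorem 2.37 as ``chaining based on Bernstein's inequality'' is loose --- that theorem rests on symmetrization and Hoeffding-type bounds for the symmetrized process. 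Neither point is a fatal gap given that the paper itself delegates these details to \citet{zhu1996asymptotics}, but a self-contained proof would need to fill them in.
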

We refer to \citet{zhu1996asymptotics} for the proof of Lemma
\ref{lem:g_hat_conv_rate}.
Using the last result, the uniform convergence rate of $\hat{g}_{i}(y)$
can be obtained.

\begin{lem}\label{lem:bound_g_hat_minus_g}
  Suppose that $K$ is a kernel of order
  $s=\left\lfloor\beta\right\rfloor$, $\E[X_i^4]<\infty$ and that
  Assumption \ref{ass:f_g_Holder} is satisfied.
  Then,
  \[
    \sup_{y}\vert\hat{g}_{i}(y)-g_{i}(y)\vert = O_{p}\left(h^{\beta} +
      \frac{\log n}{n^{1/2}h}\right).
  \]
\end{lem}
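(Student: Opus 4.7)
The natural route is the standard bias--variance split:
\[
\sup_{y}\bigl|\hat g_{i}(y)-g_{i}(y)\bigr|\;\le\;\sup_{y}\bigl|\hat g_{i}(y)-\E\hat g_{i}(y)\bigr|\;+\;\sup_{y}\bigl|\E\hat g_{i}(y)-g_{i}(y)\bigr|,
\]
so the plan is to handle the deterministic (bias) term analytically and the stochastic term via the concentration bound supplied by Lemma~\ref{lem:g_hat_conv_rate}.

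For the bias, I would first write
\[
\E\hat g_{i}(y)=\frac{1}{h}\int K\!\left(\frac{y-t}{h}\right)\!g_{i}(t)\,dt=\int K(u)\,g_{i}(y-hu)\,du,
\]
using the definition of $g_i$ in \eqref{eq:g_i} and the change of variables $u=(y-t)/h$. By Remark~\ref{rem:g_Holder}, $g_{i}\in\mathcal{H}(\beta,L)$, so I can Taylor--expand $g_i(y-hu)$ around $y$ up to order $\lfloor\beta\rfloor$ with a Hölder remainder bounded by $C L\,|hu|^{\beta}$. Since $K$ is a kernel of order $\lfloor\beta\rfloor$, conditions $(a)$--$(c)$ kill the polynomial terms of degree $1,\dots,\lfloor\beta\rfloor$, and the $\int K(u)\,du=1$ absorbs $g_i(y)$. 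Only the remainder survives, yielding $\sup_{y}|\E\hat g_{i}(y)-g_{i}(y)|\le C h^{\beta}$ uniformly in $y$.

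For the stochastic term, I would invoke Lemma~\ref{lem:g_hat_conv_rate} with the choice $\varepsilon=M\log n$ for a constant $M$ to be chosen large. The threshold on the left-hand side then becomes $8M\log n/(n^{1/2}h)$, exactly the target rate. Under Assumption~\ref{ass:f_g_Holder}, a direct computation gives $\V(K((y-Y)/h))\lesssim h$, so we may take $d\asymp h^{1/2}$. Plugging this into the three terms of the bound: the first decays like $(\log n)^{-4}(nh)^{2}\exp\!\bigl(-cM^{2}(\log n)^{3/2}/h^{1/2}\bigr)$, which is $o(1)$; the second is $\lesssim h^{-4}e^{-nh}\to0$ provided $nh/\log n\to\infty$ (which also appears in the hypotheses of Theorem~\ref{thm:main_result}); and the third is $\E[X_{i}^{4}]\mathbf{1}\{|X_{i}|>c\,h^{-1/4}(\log n)^{1/4}\}$, which tends to $0$ by dominated convergence since $\E|X_i|^{4}<\infty$ and the threshold diverges. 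Hence the probability is arbitrarily small for $M$ large, which is exactly the $O_p$ statement.

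The main obstacle is not any single step but the bookkeeping in the stochastic part: one has to verify that the chosen $\varepsilon\asymp\log n$ makes all three terms of Lemma~\ref{lem:g_hat_conv_rate} simultaneously vanish, and this relies on the mild bandwidth condition $nh/\log n\to\infty$ and on the fact that $d$ can be taken of order $h^{1/2}$ (so that both the Gaussian-type exponential and the truncation tail are tame). Once that is checked, combining the $O(h^{\beta})$ bias with the $O_{p}(\log n/(n^{1/2}h))$ stochastic bound yields the announced uniform rate.
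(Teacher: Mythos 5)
Your proposal is correct and follows essentially the same route as the paper: the bias--variance split, a Taylor expansion of $g_i$ with the H\"older remainder and the kernel order conditions giving the $O(h^{\beta})$ bias, and Lemma~\ref{lem:g_hat_conv_rate} with $\varepsilon\asymp\log n$ for the stochastic term. The only cosmetic difference is your choice $d\asymp h^{1/2}$, which is sharper than the paper's constant bound $d=\sup_{|u|\le 1}|K(u)|$ but forces you to invoke $nh/\log n\to\infty$ to kill the $d^{-8}e^{-nd^{2}}$ term, whereas the paper's cruder constant $d$ makes that term vanish unconditionally; your extra bookkeeping of the three terms is detail the paper simply omits.
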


\subsection{Proofs of Theorems and Lemmas}

\begin{proof}[Proof of Theorem \ref{thm:main_result}]
  First, we consider the usual bias-variance decomposition.
  \begin{equation*}\label{eq:MSE_decomposition}
    \E\left[(\hat{\sigma}_{ij} - \sigma_{ij})^{2}\right] = \B^{2}(\hat{\sigma}_{ij}) + \V(\hat{\sigma}_{ij}),
  \end{equation*}
  where $\B(\hat{\sigma}_{ij}) = \E[\hat{\sigma}_{ij}] - \sigma_{ij}$ and
  $\V(\hat{\sigma}_{ij})=\E[\hat{\sigma}_{ij}^2] -
  \E[\hat{\sigma}_{ij}]^2$.

  % \begin{rem}
  %   Using the decomposition \ref{eq:MSE_decomposition} and the results
  %   of Lemmas \ref{lem:Bias} and \ref{lem:Var}, we can conclude the
  %   result of Theorem \ref{thm:main_result}.
  % \end{rem}

  % We want to set some regularity conditions on the function class
  % $\Sigma(\beta,L)$ to have a parametric converge rate. In other
  % words, we want
  % \[
  %   \E[(\hat{\sigma}_{ij}-\sigma_{ij})^{2}]\leq\frac{C}{n}.
  % \]

  By Lemmas \ref{lem:Bias} and \ref{lem:Var}, we obtain the following
  upper bound for the estimation error
  \begin{equation*} \label{eq:final_bound_MSE}
    \E[(\hat{\sigma}_{ij}-\sigma_{ij})^{2}] \leq C_1 h^{2\beta} +
    \frac{C_{2}\log^{4}n}{n^{2}h^{4}} + \frac{1}{n}.
  \end{equation*}

  Depending on the regularity of the model, we consider two cases as
  follows.

  \begin{itemize}
  \item If $\beta \geq 2$, then we can choose $h$ such that
    \[
      \frac{1}{n^{1/4}}\leq h \leq \frac{1}{n^{1/2\beta}},
    \]
    and thus
    \[
      h^{2\beta} \leq \frac{1}{n},\quad \frac{1}{n^2 h^4} \leq \frac{1}{n},
    \]
    thereby concluding the result.

  \item Otherwise, if $\beta<2$, we need to find $h$ such that
    \[
      h = \argmin_h \left( h^{2\beta} + \frac{\log^4(n)}{n^{2}h^{4}}\right).
    \]
    We obtain $h=(\log^2(n)/n)^{1/(\beta + 2)}$ and the risk is bounded by
    \[
      \sup_{\mathcal{F}} \E[(\hat{\sigma}_{ij}-\sigma_{ij})^{2}]\leq \left(
        \frac{\log^2(n)}{n} \right)^{2\beta/(\beta + 2)}.
    \]
    % $h=n^{-1/(\beta + 2)}$
  \end{itemize}

\end{proof}

\begin{proof}[Proof of Theorem \ref{thm:conv-frob-norm}]

  For the estimator \eqref{eq:tapering_cov_cond}, we have
  \begin{equation*}
    \E\Vert \hat{\Sigma}_m -\Sigma \Vert_F^2
    = \sum_{i,j=1}^p \E(w_{ij} \hat{\sigma}_{ij} - \sigma_{ij})^2
  \end{equation*}
  Let $i,j\in\{1,\ldots,p\}$ be fixed.
  Then,
  \begin{align*}
    \E (w_{ij} \hat{\sigma}_{ij} - \sigma_{ij} )^2 %
    & = w^2_{ij} \E[(\hat{\sigma}_{ij} - \sigma)^2 ]
      + ( 1 - w_{ij} )^2 \sigma^2_{ij} \\
    & \leq w^2_{ij} \gamma_{n_{1}} + ( 1 - w_{ij} )^2 \sigma^2_{ij},
  \end{align*}
  % \left( \frac{C_1}{n} + C_2 h^{2\beta} + C_3 \frac{\log^4 n}{n^2_1
  %   h^4_1}\right)
  where $\gamma_{n} $is the rate \eqref{eq:rate_conv_ij_beta_gt_2} or
  \eqref{eq:rate_conv_ij_beta_lt_2} depending on the value of $\beta$.
  Furthermore,
  \begin{align*}
    \frac{1}{p} \E \Vert \hat{\Sigma} - \Sigma \Vert_F %
    & \leq \frac{1}{p}
      \sum_{\{(i,j) \colon \vert i - j\vert > m \}} \sigma^2_{ij}
      + \frac{1}{p}
      \sum_{\{ (i,j) \colon \vert i-j\vert \leq m\}} \gamma_{n} \\
    & \equiv R_1 + R_2.
  \end{align*}

  The cardinality of $\{ (i,j) \colon \vert i-j\vert \leq m\}$ is
  bounded by $mp$, so we directly find that ${R_2 \leq C m
    \gamma_{n}}$.

  % As $\lambda_\text{max}(\Sigma) \leq M_0$ then $\sigma_{ii} \leq M_0$
  % for
  % all $i$.
  % The coefficient $\vert \sigma_{ij}\vert$ is also uniformly bounded
  % for all
  % $i \neq j$ from Assumption \ref{ass:hypothesis_space_Frob}.

  Thus, using Assumption \ref{ass:hypothesis_space_Frob}, we show that

 \begin{equation*}
   \sup_{\mathcal{G}^\prime } \frac{1}{p} \sum_{\{(i,j) \colon \vert i-j \vert
     > m\}} \sigma^2_{ij} \leq C m^{-2\alpha-1},
 \end{equation*}

 where $\vert \sigma_{ij} \vert \leq C_1 \vert i-j \vert^{-(\alpha+1)}$
 for all $ j\neq i$.
 Thus,
 \begin{equation}\label{eq:risk_m_less_p}
   \sup_{\mathcal{G}^\prime}
   \frac{1}{p} \E \Vert \hat{\Sigma} -\Sigma \Vert_F^2 \leq C m^{-2\alpha
     -1} + C m \gamma_{n} \leq C_2 \gamma_{n}^{(2\alpha +
     1)/(2(\alpha + 1))}
 \end{equation}
 by choosing
 \[
   m = \gamma_{n}^{-1/(2(\alpha + 1))}
 \]
 if $\gamma_{n}^{-1/(2(\alpha + 1))}\leq p$.
 In the case of $\gamma_{n}^{-1/(2(\alpha + 1))}>p$, we choose $m=p$, so
 the bias part is 0 and consequently
 \begin{equation}\label{eq:risk_m_greater_p}
   \frac{1}{p} \E \Vert \hat{\Sigma} -\Sigma \Vert_F \leq C m
   \gamma_{n}.
 \end{equation}
 Using the result of Theorem \ref{thm:main_result}, we distinguish two
 cases depending on the regularity of the model.
 If $\beta \geq 2$, then we take $\gamma_{n} = 1/n$, and if $\beta<2$,
 then $\gamma_{n} = (\log^2 n/n)^{ 2\beta / ( \beta + 2)}$.
 The result is obtained by combining the latter with
 \eqref{eq:risk_m_less_p} and \eqref{eq:risk_m_greater_p}.
 % \begin{itemize}
  %
 % \item If $ \beta \geq 2 $, then $ \gamma_{n} = 1/n $ and
 %   \[
 %     \frac{1}{p} \E \Vert \hat{\Sigma} -\Sigma \Vert_F \leq
 %     n^{-(2\alpha + 1)/(2(\alpha + 1))}
 %   \]
 %   by choosing
 %   \[
 %     m = n^{1/(2(\alpha + 1))}.
 %   \]
  %
 %  % \item If $ \beta < 2 $, then
 %  %   $ \gamma_{n} = \left( \frac{ \log^2 n }{ n } \right)^{ 1 / (
 %  %   \beta
 %  %   + 2)}
 %  %   $
 %  %   and
 %   \[
 %     \frac{1}{p} \E \Vert \hat{\Sigma} -\Sigma \Vert_F \leq
 %     \left(\frac{\log^2n}{n}\right)^{-(2\alpha + 1)/((\beta +
 %     2)(2(\alpha + 1)))}
 %   \]
 %   by choosing
 %   \[
 %     m = \left(\frac{\log^2n}{n}\right)^{1/((\beta + 2)(2(\alpha
 %     + 1)))} .
 %   \]
 % \end{itemize}

\end{proof}

\begin{proof}[Proof of Lemma \ref{lem:Bias}]

  The proof comprises three steps.

  \paragraph{Step 1:} Prove the following.

 \begin{multline*}
   \Big\vert \E[\hat{\sigma}_{ij}] - \sigma_{ij} \Big\vert %
   \leq %
   \left\vert \frac{1}{(n-1)^{2}} %
     \E \left[ \frac{1}{f_{Y}^{2}(Y_{1})} \sum_{k=2}^{n} X_{ik} X_{jk} %
       K_{h}^{2} (Y_{1}-Y_{k}) \right] \right\vert \\
   + \left\vert \frac{1}{(n-1)^{2}} %
     \E \left[ \frac{1}{f_{Y}^{2}(Y_{1})} \sum_{\substack{k,r=2\\ k \neq
           r}}^{n} %
       X_{ik} X_{jr} K_{h} (Y_{1}-Y_{k}) K_{h}
       (Y_{1}-Y_{r}) \right] - \sigma_{ij}\right\vert \\
   + \left\vert %
     \E\left[ \frac{\hat{g}_{i}(Y_1) \hat{g}_{j}(Y_1)} %
       {f_{Y}^{2}(Y_1)} \right] %
     \left( 2\frac{(f_{Y}(y) - \hat{f}_{Y,b}(y) )}{\hat{f}_{Y,b}(y)} +
       \frac{(f_{Y}(y) - \hat{f}_{Y,b}(y))^2}{\hat{f}^2_{Y,b}(y)} %
     \right) \right\vert
 \end{multline*}

 % \E \left[ \frac{ \hat{g}_{i}(Y_1) \hat{g}_{j}(Y_1)}
 %   {\hat{f}_{Y,b}^{2}(Y_1)} \right] \leq
 % \E\left[\frac{\hat{g}_{i}(Y_1) \hat{g}_{j}(Y_1)}{f_{Y}^{2}(Y_1)}
 % \right] \left(1 + C_1 h^\beta + C_2 \frac{ \log^2 n}{n h^2}
 % \right).

 Note that $\hat{g}_{i}(Y_{k}) \hat{g}_{j}(Y_{k})/ \hat{f}_{Y,b}^{2}(Y_{k})$
 are dependent random variables for $k=1,\ldots,n$, which have the same
 distribution.
 Thus,
 \begin{equation*}
   \E [\hat{\sigma}_{ij}] %
   = \E\left[ %
     \frac{1}{n} \sum_{k=1}^{n}
     \frac{ \hat{g}_{i}(Y_{k})\hat{g}_{j}(Y_{k})} %
     {\hat{f}_{Y,b}^{2}(Y_{k})}
   \right]
   = \E\left[
     \frac{\hat{g}_{i}(Y_1) \hat{g}_{j}(Y_1)} %
     { \hat{f}_{Y,b}^{2}(Y_1)}
   \right].
 \end{equation*}

 Furthermore, note that
 \begin{equation} %
   \label{eq:f_div_fhat}
   \frac{f_{Y}^{2}(y)}{\hat{f}_{Y,b}^{2}(y)} %
   = 1 + 2\frac{(f_{Y}(y) - \hat{f}_{Y,b}(y) )}{\hat{f}_{Y,b}(y)} %
   + \frac{(f_{Y}(y) - \hat{f}_{Y,b}(y) )^2}{\hat{f}^2_{Y,b}(y)}.
 \end{equation}

 Denote
 \begin{equation*}
   B = \frac{\hat{g}_{i}(Y_1) \hat{g}_{j}(Y_1)}{f^{2}(Y_1)}.
 \end{equation*}

 By developing the last equation, we obtain
 \begin{align}
   \label{eq:B1_plus_B2}
   B %
   & =\frac{1}{(n-1)^{2}} \frac{1}{f_{Y}^{2}(Y_{1})} \left(\sum_{k=2}^{n}
     X_{ik} K_{h}(Y_{1}-Y_{k}) \right)
     \left(\sum_{k=2}^{n} X_{jk} K_{h} (Y_{1}-Y_{l}) \right) \nonumber \\
   & =\frac{1}{(n-1)^{2}} \frac{1}{f_{Y}^{2}(Y_{1})}
     \sum_{k=2}^{n} X_{ik} X_{jk} K_{h}^{2} (Y_{1}-Y_{k}) \nonumber\\
   & \qquad + \frac{1}{(n-1)^{2}} \frac{1}{f_{Y}^{2}(Y_{1})}
     \sum_{\substack{k,r=2\\ k \neq r}}^{n} X_{ik} X_{jr} K_{h} (Y_{1}-Y_{k})
   K_{h} (Y_{1}-Y_{r}) \nonumber \\
   & \equiv B_{1} + B_{2}  .
 \end{align}

 Finally, if we multiply and divide by $f_{Y}^2(Y)$ inside the
 expectation, and apply Equations \eqref{eq:f_div_fhat} and
 \eqref{eq:B1_plus_B2}, we obtain the following.
 \begin{align*}
   \label{eq:g_i_g_j_div_f_Y1}
   \Big\vert \E[\hat{\sigma}_{ij}] - \sigma_{ij} \Big\vert %
   & = \Bigg\vert \E\left[\frac{\hat{g}_{i}(Y_1)
     \hat{g}_{j}(Y_1)}{\hat{f}_{Y,b}^{2}(Y_1)}\right] - \sigma_{ij} \Bigg\vert \\
   & = \Bigg\vert \E \left[ \frac{\hat{g}_{i}(Y_1) \hat{g}_{j}(Y_1)}
     {f_{Y}^{2}(Y_1)} \frac{f_{Y}^{2}(Y_{1})} {\hat{f}_{Y,b}^{2}(Y_{1})}
     \right]
     - \sigma_{ij} \Bigg\vert \\
   & \leq \Big\vert \E[B_{1}] \Big\vert + \Big\vert \E[B_{2}] - \sigma_{ij} \Big\vert \\
   & \qquad + \Bigg\vert \E\bigg[B \bigg( 2\frac{(f_{Y}(y) - \hat{f}_{Y,b}(y)
     )}{\hat{f}_{Y,b}(y)} + \frac{(f_{Y}(y) -
     \hat{f}_{Y,b}(y))^2}{\hat{f}^2_{Y,b}(y)}
     \bigg) \bigg] \Bigg\vert
 \end{align*}

 % We can split the domain of the expectation into the elements
 % belonging to the subset
 % \begin{equation*}
 % \left\{ y \colon \sup_{y}\vert\hat{f}(y)-f(y)\vert\leq C\left(h^{\beta} + \frac{\log
 % n}{n^{1/2}h}\right) , \text{ for C $\in\R^+$} \right\}
 % \end{equation*}
 % and its complement.

 % Conditioning equation~\eqref{eq:g_i_g_j_div_f_Y1} with respect to
 % $Y_1$
 % and
 % applying Lemma \ref{lem:f_hat_conv_rate} and Remark
 % \ref{ass:order_conv_h_b}, it turns into

 % \begin{align*}
 % & \E\left[
 % \frac{\hat{g}_{i}(Y_1) \hat{g}_{j}(Y_1)} {\hat{f}_{Y,b}^{2}(Y_1)} \right]\\
 % & \leq \E\left[\frac{\hat{g}_{i}(Y_1)\hat{g}_{j}(Y_1)}{f_{Y}^{2}(Y_1)}
 % \right] \left(1 + C_1 n^{c_1}\left(h^\beta + \frac{\log
 % n}{n^{1/2}h}\right) + C_2 n^{2c_1}\left(h^\beta + \frac{\log
 % n}{n^{1/2}h}\right)^2 \right) \\%
 % & \leq \E\left[\frac{\hat{g}_{i}(Y_1)\hat{g}_{j}(Y_1)}{f_{Y}^{2}(Y_1)}
 % \right] (1 + C_1 (n^{c_1-\beta c_2} + n^{1/2 -c_1 -c_2}\log n)).%
 %   % & \leq
 %   % \E\left[\frac{\hat{g}_{i}(Y_1)\hat{g}_{j}(Y_1)}{f_{Y}^{2}(Y_1)}
 %   % \right] \left(1 + C_1 h^\beta + C_2\frac{\log^2n}{nh^2}
 %   % \right).
 % \end{align*}

 % \\
 % + C_2 \left( \frac{n-2}{n-1} \right) \E \left( \frac{X_{i2} X_{j3}
 %   K_{h}
 %   (Y_{1}-Y_{2}) K_{h} (Y_{1}-Y_{3})}{f_{Y}^{2}(Y_{1})}\right).

 \paragraph{Step 2:}
 Prove the following.

 \begin{equation*}
   \Big\vert \E[B_{1}] \Big\vert \leq \frac{C}{nh}
 \end{equation*}

 By conditioning with respect to $Y_1$ and noting that the random
 variables inside have the same distribution, we compute

 \begin{align*}
   \E[B_{1}] %
   & = \frac{1}{(n-1)^{2}} \E \left[ \frac{1}{f_{Y}^{2}(Y_{1})} \E \left[
     \left.\sum_{k=2}^{n} X_{ik} X_{jk} K_{h}^{2} (Y_{1}-Y_{k}) \
     \right|\ Y_{1}\right]\right] \\
   & = \frac{1}{n-1} \E \left[\frac{X_{i2}
     X_{j2} K_{h}^{2} (Y_{1}-Y_{2})} {f_{Y}^{2}(Y_{1})} \right].
 \end{align*}

 By writing the last expression based on its integral form and
 remembering that $g_{i}(y) = \int x_{i}f(x_{i},y)dx_{i}$, we have the
 following.

 \begin{align*}
   \E[B_{1}] %
   & = \frac{1}{n-1} \int \left( \int x_{i} K_{h}(y_{1}-y_{2}) %
     f(x_{i},y) dx_{i}dy_{2}\right) \\
   & \qquad \left( \int x_{j} K_{h}(y_{1}-y_{2}) %
     f(x_{j},y) dx_{j}dy_{2}\right) \frac{1}{f_{Y}(y_{1})} dy_{1} \\
   & = \frac{1}{n-1} \int \frac{g_{i}(y_{2}) g_{j}(y_{2})}{f_{Y}(y)} %
     K_{h}^{2}(y_{1}-y_{2}) dy_{1} dy_{2} \\
   & = \frac{1}{(n-1)h} \int \frac{g_{i}(y_{1}+uh) g_{j}(y_{1}+uh)}{f_{Y}(y)} %
     K^{2}(u) du dy_{1} \\
 \end{align*}

 We remark that given $a_m$ and $b_m,\ m=1,2$ are real numbers such as
 \linebreak $a_m~<~b_m$, then the integrals containing the coordinate
 $(x,y)$ will be evaluated in the cube $[a_1,b_1]\times[a_2,b_2]$.

 Define the supremum norm of $f$ as $\Vert f \Vert_\infty =
 \sup\{f(x,y)\in [a_1,b_1]\times[a_2,b_2] \}$.
 As a consequence, we can also define the supremum norm of $g_{i}$ as
 $\Vert g_{i} \Vert_\infty = \sup\{g_{i}(y)\in [b_{1},b_{2}]\}$.
 Therefore,

 \begin{equation*}
   \Big\vert \E[B_{1}] \Big\vert \leq
   \frac{\Vert g_{i}\Vert_{\infty} \Vert g_{j}\Vert_{\infty} }{(n-1)h}
   \int \frac{1}{f_{Y}(y)} K^{2}(u) du dy_{1},
 \end{equation*}

 which leads to

 \begin{equation*}
   \Big\vert \E[B_{1}] \Big\vert \leq \frac{C}{nh}.
 \end{equation*}

 \paragraph{Step 3:} Prove the following.

 \begin{multline*}
   \Bigg\vert \E\bigg[B \bigg( 2\frac{(f_{Y}(y) - \hat{f}_{Y,b}(y)
     )}{\hat{f}_{Y,b}(y)} + \frac{(f_{Y}(y) -
     \hat{f}_{Y,b}(y))^2}{\hat{f}^2_{Y,b}(y)} \bigg) \bigg] \Bigg\vert
   \\ %
   \leq C (n^{c_1-\beta c_2} + n^{1/2 -c_1 -c_2}\log n)
 \end{multline*}

 By Lemma \ref{lem:f_hat_conv_rate}, we have

 \begin{align*}
   & \Bigg\vert \E\bigg[B \bigg( 2\frac{(f_{Y}(y) - \hat{f}_{Y,b}(y)
     )}{\hat{f}_{Y,b}(y)} + \frac{(f_{Y}(y) -
     \hat{f}_{Y,b}(y))^2}{\hat{f}^2_{Y,b}(y)}
     \bigg) \bigg] \Bigg\vert \\
   & \leq \left \vert \E[B]
     \right \vert \left( C_1 n^{c_1}\left(h^\beta
     + \frac{\log n}{n^{1/2}h}\right)
     + C_2 n^{2c_1}\left(h^\beta + \frac{\log
     n}{n^{1/2}h}\right)^2 \right) \\
   & \leq C (n^{c_1-\beta c_2} + n^{1/2 -c_1 -c_2}\log n).
 \end{align*}

 The last line is given by the assumption that $\E[X^4]$ is finite and by
 Remark \ref{ass:order_conv_h_b}.

 \paragraph{Step 4:} Show that

\begin{equation*}
  \vert \E[B_2] - \sigma_{ij} \vert \leq C h^{2\beta}.
\end{equation*}

This term can be bounded as follows
% & \leq \int
% \frac{x_{i}x_{j}K_{h}(y_{1}-y_{2})K_{h}(y_{1}-y_{3})}{f_{Y}^{2}(y_{1})}f(x_{i},y_{2})f(x_{j},y_{3})f_{Y}(y_{1})
% dx_{i} dy_{2} dx_{j} dy_{3} dy_{1} \\

\begin{multline}\label{eq:B_2}
  \E[B_2] = \left( \frac{n-2}{n-1} \right) \E \left( \frac{X_{i2} X_{j3}
      K_{h} (Y_{1} - Y_{2}) K_{h} (Y_{1} - Y_{3})}
    {f_{Y}^{2} (Y_{1})} \right) \\
  \leq \int \left (\int x_{i} K_{h} (y_{1} - y) f(x,y) dx_{i} dy
  \right) \\
  \hspace{2em} \left( \int x_{j} K_{h} (y_{1} - y) f(x,y) dx_{j} dy \right)
  \frac{1}{f_{Y}(y_{1})} dy_{1}.
\end{multline}
% For two functions $ u$ and $ v $, we define the convolution product
% between them as,
% \[
%   (u \star v) (y) = \int u(t-y) v(t)\ dt.
% \]
% With this notation, we observe that it is enough to bound
% \begin{multline}
%   B_{2}-\Sigma=\int\left\{ \left(\int x_{i}\left(K_{h}\star
%         f(x_{i},\cdot)\right)(y)dx_{i}\right)\left(\int
%       x_{j}\left(K_{h}\star
%         f(x_{j},\cdot)\right)(y)dx_{j}\right)\right. \\
%   \left.-\left(\int x_{i}f(x_{i},y)dx_{i}\right)\left(\int
%       x_{j}f(x_{j},y)dx_{j}\right)\right\}
%   \frac{1}{f_Y(y)}dy\label{eq:B2_minus_Sigma}
% \end{multline}
By Assumption \ref{ass:f_g_Holder} with $s=\lfloor\beta\rfloor$ and for
$0<\tau<1$, we have
\begin{align*}
  & \hspace{-2em} \int K_{h} (y_1 - y) f(x,y) dy - f(x,y) \\
  & = \int K(u) f(x,uh + y) du - f(x,y) \\
  & =\int K(u) \left\{ f(x,y) + uh f^{\prime}(x,y) + \cdots +
    \frac{f^{(s)}(x,y + \tau uh)}{s!} \left( uh
    \right)^{j} \right\} du \\
  & =\frac{1}{s!} \int K(u)\left( uh \right)^{s} f^{(s)} (x,y + \tau uh) du.
\end{align*}
By adding $K(u)(uh)^sf^{(s)}(x,y)$ to the last integral and given that
$f(x,\cdot)\in \mathcal{H}(\beta,L)$, we can see that
\begin{align*}
  & \hspace{-2em} \int K_{h}(y_1-y) f(x,y)dy - f(x,y) \\
  & = \frac{1}{s!} \int K(u) \left( uh \right)^{s} \left\{ f^{(s)}
    (x,y + \tau uh)-f^{(s)}(x,y) \right\} du \\
  & \leq \frac{1}{s!} \int K(u) \left( uh \right)^{s} \left( \tau
    uh\right)^{\beta - s} du \\
  & \leq \left( \frac{1}{s!} \int \vert u^{\beta} K(u) \vert du \right)\tau
    h^{\beta} = Ch^{\beta}.
\end{align*}

% \begin{align*}
%   \left(K_{h}\star f(x,\cdot)\right)(y) & =\frac{1}{s!}\int
%   K(u)\left(uh\right)^{s}\left\{ f^{(s)}(x,y
%     + \tau uh)-f^{(s)}(x,y)\right\} du + f(x,y) \\
%   & \leq\frac{j}{s!}\int K(u)\left(uh\right)^{s}\left(\tau
%     uh\right)^{\beta-s}du + f(x,y)
%   \\
%   & \leq\left(\frac{j}{s!}\int \vert u^{\beta}K(u)\vert
%     du\right)h^{\beta} + f(x,y)=Ch^{\beta} + f(x,y).
% \end{align*}

By plugging this into \eqref{eq:B_2}, we obtain
\begin{align*}
  \vert \E[B_2] - \sigma_{ij} \vert %
  & \leq \Bigg\vert \int \left\{ \left(\int x_{i}
    \left( Ch^{\beta} +
    f(x_{i},y) \right) dx_{i} \right) \left(\int x_{j}
    \left( Ch^{\beta} +
    f(x_{j},y) \right) dx_{j} \right) \right.\\
  & \left.\qquad - \left(\int x_{i} f(x_{i},y) dx_{i} \right)
    \left(\int
    x_{j} f(x_{j},y) dx_{j} \right) \right\} \frac{1}{f_{Y}(y)} dy
    \Bigg\vert \\
  & \leq C^2 h^{2\beta} \Bigg\vert\int \frac{x_i x_j}{f_{Y}(y)} \ dx_i dx_j
    dy \Bigg\vert \\
  & \hspace{2em} + C h^\beta
    \Bigg\vert\left(
    \int x_i x_j f(x_i, y)\ dx_j dx_i dy
    + \int x_i x_j f(x_j, y)\ dx_j dx_i dy
    \right)\Bigg\vert \\
  & \leq C_{1} h^{\beta} + C_{2}h^{2\beta} \\
  & \leq C_{1} h^{\beta}.
\end{align*}

%    % \textcolor{red}{Do I need to add some condition about $ \int
%    % \frac{x_j x_l}{f_Y(y)} \ dx_j dx_l dy $ and $ \left( \int x_j
%   R_l(y)dx_j dy + \int x_l R_j(y)dx_l dy\right) $ are bounded? }
%
\paragraph{Step 5:} By combining the results from Steps 1 to 4, we have

 \begin{equation*}
   \B(\hat{\sigma}_{ij}) \leq C_{1} h^{\beta} + \frac{C_{2}}{nh}
   + C_3 (n^{c_1-\beta c_2} + n^{1/2 -c_1 -c_2}\log n).
   % \left(1 + C_3 h^\beta + C_4 \frac{ \log^2 n}{nh^2}\right),
 \end{equation*}

 which allows us to conclude that
 % Since $h\to 0$ and $nh\to \infty$ as $n\to\infty$, we obtain

\begin{equation*}
  \B^{2}( \hat{\sigma}_{ij})\leq C_{1} h^{2\beta}
  + \frac{C_{2}}{n^{2} h^{2}}.
\end{equation*}
% Since $nh\to \infty$ and $nh\to \infty$ as $n\to\infty$ and
% $n\to\infty$ respectively, we obtain
\end{proof}

% \subparagraph{Proof of Lemma \ref{lem:Var}}
\begin{proof}[Proof of Lemma \ref{lem:Var}]

  The proof comprises several steps.
  Define
  \begin{align*}
    R_{i,b}(Y) & = \frac{g_{i}(Y)}{f_{Y}(Y)},\\
    V_1(Y) & = \frac{g_{i}(Y) g_{j}(Y)}{f^{2}_{Y,b}(Y)}
            = R_{i,b}R_{j,b} , \\
    V_2(Y) & = \frac{g_{i}(Y)}{f^{2}_{Y,b}(Y)} \left(\hat{g}_{j}(Y) -
            g_{j}(Y)\right)
            = \frac{R_{i,b}}{f_{Y}(Y)} \left(\hat{g}_{j}(Y) -
            g_{j}(Y)\right), \\
    V_3(Y) & = \frac{g_{j}(Y)}{f^{2}_{Y,b}(Y)} \left(\hat{g}_{i}(Y) -
            g_{i}(Y)\right)
            = \frac{R_{j,b}}{f_{Y}(Y)} \left(\hat{g}_{i}(Y) -
            g_{i}(Y)\right), \\
    V_4(Y) & = \frac{1}{f^{2}_{Y,b}(Y)} \left(\hat{g}_{i}(Y) - g_{i}(Y)\right)
            \left(\hat{g}_{j}(Y) - g_{j}(Y)\right), \\
    J_{n}(Y) & = (V_1(Y) + V_2(Y) + V_3(Y) + V_4(Y))\\
              & \hspace{4em} \left(2\frac{(f_{Y}(Y) -
                \hat{f}_{Y,b}(Y))}{\hat{f}_{Y,b}(Y)}
                + \frac{(f_{Y}(Y) -
                \hat{f}_{Y,b}(Y))^2}{\hat{f}^2_{Y,b}(Y)}\right).
  \end{align*}
  It is clear that $\hat{\sigma}_{ij} = n^{-1} \sum_{k=1}^{n} V_1(Y_k) +
  V_2(Y_k) + V_3(Y_k) + V_4(Y_k) + J_n(Y_k)$.
  If $C>0$, then the variance $\V(\hat{\sigma}_{ij})$ is bounded by
  \begin{multline*}\label{eq:Variance_decomposition}
    C \left\{\V\left(\frac{1}{n}\sum_{k=1}^{n}V_{1}(Y_k)\right) +
      \V\left(\frac{1}{n}\sum_{k=1}^{n}V_{2}(Y_k)\right) +
      \V\left(\frac{1}{n}\sum_{k=1}^{n}V_{3}(Y_k)\right)\right.\\
    + \left.\V\left(\frac{1}{n}\sum_{k=1}^{n}V_{4}(Y_k)\right) +
      \V\left(\frac{1}{n}\sum_{k=1}^{n}J_n(Y_k)\right)\right\}.
  \end{multline*}
  We bound each term separately.

  \paragraph{Step 1:}

  Prove that
  % \[
  %   \V(J_{n} ) \leq \frac{h^{2\beta}}{n} + \frac{\log^4 n}{n
  %   n^2 h^4}.
  % \]
  % \[
  %   \V(J_{n}) \leq C n^{-1}\left(n^{3c_1 - \beta c_2} + (n^{3c_1
  %     -
  %     1/2 + c_2})\log{n}\right) \leq C n^{-1}\left(1 +
  %     \log{n}\right).
  % \]

 \[
   \V\left(\frac{1}{n}\sum_{k=1}^{n}J_n(Y_k)\right) \leq C (n^{2c_1-4\beta
     c_2}+n^{2c_1+2c_2-1}\log{n}).
 \]
 First, we bound the term
 \[
   J_{1n}=\frac{1}{n} \sum_{k=1}^{n} V_1(Y_k)\frac{(f_{Y}(Y_k) -
     \hat{f}_{Y,b}(Y_k))}{\hat{f}_{Y,b}(Y_k)}.
 \] %
 Using the Cauchy-Schwartz inequality, it is straightforward to find
 that
 \begin{align*}
   \V(J_{1n}) %
   & \leq \frac{1}{n^2} \E\left[\left(\sum_{k=1}^{n}
     \frac{g_i(Y_k)g_j(Y_k)}{f^2_{Y,b}} \left(\frac{f_{Y}(y) -
     \hat{f}_{Y,b}(y)}{\hat{f}_{Y,b}(y)}\right) \right)^2
     \right] \\
   & \leq \frac{1}{n}
     \E\left[\sum_{k=1}^{n}\left(\frac{g_i(Y_k)g_j(Y_k)}{f^2_{Y,b}(Y_k)}
     \left(\frac{f_{Y}(Y_k) -
     \hat{f}_{Y,b}(Y_k)}{\hat{f}_{Y,b}(Y_k)}\right)\right)^2\right] & .
 \end{align*}

 By Lemma \ref{lem:f_hat_conv_rate} and Remark \ref{ass:order_conv_h_b}, we
 have %
 \begin{align*}
   \V(J_{1n}) %
   & \leq C \E\left[\frac{1}{n}\sum_{k=1}^{n}
     \frac{g^2_i(Y_k)g^2_j(Y_k)}{f^4_{Y,b}(Y_k)} \right]
     b^{-2}(h^{2\beta} + n^{-1/2}h^{-1}\log{n})^2\\
   &\leq C (n^{2c_1-4\beta c_2}+n^{2c_1-1+2c_2}\log{n}),
 \end{align*}
 where the second inequality is due to the law of large numbers for
 $n^{-1}\sum_{k=1}^{n}R_{i,b}(Y_k)R_{j,b}(Y_k)$.

 For Steps 2 to 4, we denote $Z_k=(\boldsymbol{X}_k,Y_k)$ for
 $k=1,\ldots,n$.

 \paragraph{Step 2.} Prove that
 \[
   \V(V_1) \leq \frac{C}{n}.
 \]

 % \textbf{\emph{Control of $\V\left(V_{1}\right)$}}\textbf{ }\\

 By the independence of the $Z_k$s and given that $g_j$, $g_l$ and
 $\hat{f}_Y$ are functions built with the second sample, it is clear
 that
 %% MAIKOL : preciser deterministe ....
 \begin{multline*}
   \V\left(\frac{1}{n} \sum_{k=1}^{n}V_{1}(Y_k)\right) =
   \V\left(\frac{1}{n} \sum_{k=1}^{n} R_{i,b}(Y_k)R_{j,b}(Y_k)\right) \\
   = \frac{1}{n} \V\left(R_{i,b}(Y)R_{j,b}(Y)\right) \leq \frac{C}{n}.
 \end{multline*}

 \paragraph{Step 3.} Show that
 \[
   \V\left(V_{2}\right) + \V\left(V_{3}\right) \leq\frac{C_{1}}{n} +
   \frac{C_{2}}{n^{2}h}.
 \]
 % \textbf{\emph{Control of $\V\left(V_{2}\right)$ and
 % $\V\left(V_{3}\right)$
 % }}\\
 First, we obtain a bound of $\V(V_{2})$.
 Note that
 \begin{align*}
   V_{2}= %
   & \frac{1}{n} \sum_{k=1}^{n} \frac{R_{i,b}(Y_{k})}{f_{Y}(Y_{k})}
     \left( \hat{g}_{j}(Y_{k})-g_{j}(Y_{k}) \right) \\
   & =\frac{1}{n(n-1)} \sum_{k=1}^{n} \sum_{\substack{l=1\\ l\neq k } }^{n}
   \frac{R_{i,b}(Y_{k})}{f_{Y}(Y_{k})}
   \left( X_{jl}K_{h}(Y_{k}-Y_{l}) - g_{j}(Y_{k}) \right) & \\
   & = \frac{1}{n(n-1)} \sum_{k=1}^{n} \sum_{\substack{l=1\\ l\neq k}}^{n}
   \frac{R_{i,b}(Y_{k})}{f_{Y}(Y_{k})} X_{jl}
   K_{h}(Y_{k}-Y_{l}) & \\
   & \hspace{2em} - \frac{1}{n(n-1)} \sum_{i=1}^{n}
     \sum_{\substack{l=1\\ l\neq k }}^{n} R_{i,b}(Y_{k})R_{j,b}(Y_{k}) \\
   & =V_{21}-V_{22}.
 \end{align*}

 Note that
 \[
   \V(V_{22})= \frac{1}{n} \V\left(R_{i,b}(Y)R_{j,b}(Y)\right) =
   \frac{C}{n}.
 \]

 The term $V_{21}$ is indeed a one sample U-statistic of order two.
 Hence, if we define $Z_{li}=(X_{li},Y_{k})$ and rewrite the expression, we
 obtain
 \begin{align*}
   V_{21} %
   & = \frac{1}{n(n-1)} \sum_{k=1}^{n} \sum_{\substack{l=1\\l\neq~k}}^{n}
   \frac{R_{i,b}(Y_{k})}{f_{Y}(Y_{k})}
   X_{jl}K_{h}(Y_{k}-Y_{l}) \\
   & = \frac{1}{2} \binom{n}{2}^{-1}\sum_{(k,l)\in
     C_{2}^{n}}h_{i}(Z_{ik},Z_{il}),
 \end{align*}
 where $C_{2}^{n}=\left\{ (k,l);1\leq k<l\leq n\right\}$ and
 \[
   h_{i}(Z_{jk},Z_{jl}) =
   \frac{R_{i,b}(Y_{k})}{f_{Y}(Y_{k})}X_{jk}K_{h}(Y_{k}-Y_{l}).
 \]
 By employing the symmetric version of $h_{i}$
 \[
   \widetilde{h}_{i}(Z_{jk},Z_{jl}) = \frac{1}{2}\left(
     \frac{R_{i,b}(Y_{k})}{f_{Y}(Y_{k})}X_{jl} +
     \frac{R_{i,b}(Y_{l})}{f_{Y}(Y_{l})} X_{jk}
   \right)K_{h}(Y_{k}-Y_{l}),
 \]
 it is possible (see \cite{kowalski2008modern} or
 \cite{van2000asymptotic}) to decompose $\V(V_{21})$ as
 \begin{multline*}
   \V(V_{21})= \binom{n}{2}^{-1}\left\{ \binom{2}{1} \binom{n-2}{1} \V(
     \E( \widetilde{h}_{i}(Z_{j1},Z_{j2})\vert
     Z_{j1}))\right.\\
   + \left.\binom{2}{2} \binom{n-2}{0} \V(
     \widetilde{h}_{i}(Z_{j1},Z_{j2}))\right\}.
 \end{multline*}

 Since $f_y(y)\leq f_{Y}(y)$, we have
 \begin{align*}
   & \hspace{-2em}
     \E(\widetilde{h}_{i}(Z_{j1},Z_{j2})\vert Z_{j1})\\
   & = \frac{1}{2}\int
     K_{h}(Y_{1}-y)\left(\frac{x_jR_{i,b}(Y_{1})}{f_{Y}(Y_{1})} +
     \frac{X_{j1}R_{i}(y)}{f_{Y}(y)}\right)f(x_j,y)dx_jdy\\
   & = \frac{R_{i,b}(Y_1)}{2f_{Y}(Y_1)} \int K_{h}(Y_{1}-y)
     R_j(y)f_Y(y)dy\\
   & \hspace{2em} + \frac{1}{2}X_{j1}\int K_{h}(Y_{1}-y)
     \frac{R_{i,b}f_Y(y)}{f_{Y}(y)}dy\\
   & \leq \frac{1}{2} R_{i}(Y_1)R_j(Y_1) + \frac{1}{2}X_{j1}R_{i}(Y_1)\\
   & \hspace{2em} + \frac{R_{i}(Y_1)}{2f_{Y}(Y_1)} \int
     K_{h}(Y_{1}-y) \{R_j(y)f_Y(y) - R_j(Y_1)f_Y(Y_1)\} dy\\
   & \hspace{2em} + \frac{1}{2}X_{j1} \int K_{h}(Y_{1}-y) \left\{
     R_{i}(y) - R_{i}(Y_1)\right\} dy\\
   & \leq \frac{1}{2}R_{i}(Y_1)R_j(Y_1) + \frac{1}{2}X_{j1}R_{i}(Y_1) +
     J_{1}(Z_{j1}) + J_{2}(Z_{j1}).
 \end{align*}

 Using Assumption \ref{ass:f_g_Holder} and by applying the same arguments
 employed in the proof of Lemma \ref{lem:g_hat_conv_rate}, we can
 conclude that
 \begin{align*}\label{pb:prob_hyp_g}
   \V(J_{1}(Z_{j1})) %
   & \leq\E[(J_{1}(Z_{j1}))^{2}] \\
   & \leq C h^{2\beta}\int\left(\frac{R_{i}(y)}{f_{Y}(y)}\right)^2
     \left(\int\vert u^s K(u)du\vert\right)^{2}f_{Y}(y)dy\nonumber \\
   &\leq C h^{2\beta}\E[R_i^2(Y_1)]\leq Ch^{2\beta}.
 \end{align*}
 Moreover, as $f\in \mathcal{H}(\beta,L)$ and $0<\eta<f_Y(y)$, we have
 % \begin{equation}
 %   R_{i}(Y_{1}+uh)-R_{i}(Y_{1})\to 0\label{eq:1}
 % \end{equation}
 % as $h\to 0$, applying the dominated convergence theorem, we have
 \begin{align*}
   \V(J_2(Z_{j1})) & \leq\E[(J_{2}(Z_{j1}))^{2}] \\
                 & \leq \frac{1}{4} \E\left[X_{j1}^2 \left(\int
                   K(u)\left(R_{i}(Y_{1}+uh)-R_{i}(Y_{1})du\right) du
                   \right)^2\right]\\
                 & \leq Ch^{2\beta}
 \end{align*}

 % \begin{align*}
 %   \V(J_2(Z_{j1})) & \leq\E[(J_{2}(Z_{j1}))^{2}] \\
 %   & \leq \frac{1}{4} \E\left[X_{j1}^2 \left(\int
 %   %       K(u)\left(R_{i}(Y_{1}+uh)-R_{i}(Y_{1})du\right) du
 %   %     \right)^2\right]\\
 %   %   & \leq \frac{C}{n}.
 % \end{align*}

 and thus,
 \begin{align*}
   \V(\E(\widetilde{h}_{i}(Z_{j1},Z_{j2})\vert Z_{j1})) &
                                                      \leq\V\left(\frac{1}{2}R^{2}(Y_{1}) + \frac{1}{2}X_{l1}R(Y_{1})\right) + C_1
                                                      h^{2\beta}. + \frac{C_2 }{n}
 \end{align*}
 By similar calculations, we bound
 \[
   \V(\widetilde{h}_{i}(Z_{j1},Z_{j2}))\leq\frac{C_{2}}{h}.
 \]
 Using the same procedure, we can bound $\V(V_3)$.
 We conclude that

 % \begin{align*}
 %   \V(V_{2}) + \V(V_{3}) & \leq\frac{2}{n(n-1)}\left\{
 %   %  (n-2)\left(C_{1} + C_2 h^{2\beta} +
 %   % \frac{C_3}{n}\right) + \frac{C_{4}}{h}\right\} \\
 %   % & \leq C_1 h^{2\beta} + \frac{C_2}{n} + \frac{C_{3}}{n^{2}h}.
 % \end{align*}

\begin{align*}
  \V(V_{2}) + \V(V_{3}) & \leq\frac{2}{n(n-1)}\left\{ (n-2)\left(C_{1}
                        + C_2 h^{2\beta} + \right) + \frac{C_{4}}{h}\right\} \\
                      & \leq C_1 h^{2\beta} + \frac{C_{2}}{n^{2}h}.
\end{align*}

\paragraph{Step 4.} Show that
\[
  \V(V_4) \leq C (h^{4\beta} + n^{-2}h^{-4}\log^4{n}).
\]

% \begin{comment}
%   \begin{proof}
%     It is obvious that
%     \begin{multline*}
 %
%       \frac{1}{n}\sum_{k=1}^{n}\frac{1}{f^{2}(Y_{k})}\left(\hat{g}_{i}(Y_{k})-g_{i}(Y_{k})\right)\left(\hat{g}_{j}(Y_{k})-g_{j}(Y_{k})\right)\\
 %
%       =\frac{1}{n}\sum_{k=1}^{n}\frac{\hat{g}_{i}(Y_{k})\hat{g}_{j}(Y_{k})}{f^{2}(Y_{k})}-\hat{g}_{i}(Y_{k})\frac{g_{j}(Y_{k})}{f^{2}(Y_{k})}-\hat{g}_{j}(Y_{k})\frac{g_{i}(Y_{k})}{f^{2}(Y_{k})}
%       + \frac{g_{i}(Y_{k})g_{j}(Y_{k})}{f^{2}(Y_{k})}.
%     \end{multline*}
 %
 %
%     Notice that we can bound the second and third terms exactly as
%     $V_{2}$ and $V_{3}$. The last term is bounded by $1/n$ similarly
%     like $V_{1}$. \end{proof}
% \end{comment}
Using Lemma \ref{lem:bound_g_hat_minus_g}, we obtain
\begin{align*}
  \V\left(V_{4}\right) %
  & \leq\E[V_{4}^{2}] \\
  & =\frac{1}{n^{2}} \E\left[\left(\sum_{k=1}^{n} \frac{
    \left(\hat{g}_{i}(Y_{k})-g_{i}(Y_{k})\right)
    \left(\hat{g}_{j}(Y_{k})-g_{j}(Y_{k})\right)}{f_{Y}^{2}(Y_{k})}
    \right)^{2}\right]\\
  & \leq \frac{Cn^2}{n^{2}} (h^{\beta} + n^{-1/2}h^{-1}\log{n})^4\\
  & \leq C (h^{\beta} + n^{-1/2}h^{-1}\log{n})^4\\
  & \leq C (h^{4\beta} + n^{-2}h^{-4}\log^4{n}).
\end{align*}

\paragraph{Final bound}

By combining all of the previous results, we have

\begin{equation*}
  \V\left(\hat{\sigma}_{ij}\right)\leq
  C_1 h^{2\beta} + \frac{C_{2}\log^{4}n}{n^{2}h^{4}} + \frac{1}{n}.
\end{equation*}

\end{proof}

\begin{proof}[Proof of Lemma~\ref{lem:bound_g_hat_minus_g}]
  The kernel function $K$ is uniformly continuous on $[-1,1]$, so by
  writing $c_{1}=\sup_{\vert u\vert\leq 1}\vert K(u)\vert$, we have
  \[
    \sup_{y}
    \left(\V\left(K\left(\frac{y-Y}{h}\right)\right)\right)^{1/2}
    \leq\sup_{y}\left(\int
      K^{2}\left(\frac{y-Y}{h}\right)f(Y)dY\right)^{1/2} \leq c_{1}.
  \]
  Choose $\varepsilon=\log n$, then as $n\to\infty$, we have
  \[
    \sup_{y}\vert\hat{g}_{i}(y) - \E\left(\hat{g}_{i}(y)\right)\vert =
    O_{p}\left(n^{-1/2}h^{-1}\log n\right).
  \]
  By contrast, we expand $g_{i}(y)$ in a Taylor series with the Lagrange
  form of the remainder term (see \cite{prakasa1983nonparametric}, page
  47).
  Using Assumption \ref{ass:f_g_Holder} and Remark \ref{rem:g_Holder}, for
  any $0 < \tau < 1$ and $s=\left\lfloor\beta\right\rfloor$, we have,

 \begin{align*}
   & \sup_{y}\vert\E\left(\hat{g}_{i}(y)\right) - g_{i}(y)\vert \\
   & =\sup_{y}\left|\int K_{h}(y-Y)\left\{ g_{i}(Y)-g_{i}(y)\right\}
     dY\right|\\
   & =\sup_{y}\left|\int K(u)\left\{ g_{i}(y + uh)-g_{i}(y)\right\}
     du\right|\\
   & =\sup_{y}\left|\int K(u)\left\{ g_{i}(y) + uhg_{i}^{\prime}(y) + \cdots
     + (uh)^{s}\frac{g_{i}^{(s)}(y + \tau uh)}{s!}-g_{i}(y)\right\}
     du\right|\\
   & =\sup_{y}\left|\int
     K(u)\frac{\left(uh\right)^{s}}{s!}\left(g_{i}^{(s)}(y + \tau
     uh)-g_{i}^{(s)}(y)\right)du\right|,
 \end{align*}

 and as $g_i \in \mathcal{H}(\beta,L)$, we conclude that,
 \[
   \sup_{y}\vert\E\left(\hat{g}_{i}(y)\right)-g_{i}(y)\vert \leq
   c\int\left|u^{\beta}K(u)\right|du\cdot h^{\beta}.\qedhere
 \]
\end{proof}

\section*{References}

\bibliographystyle{apalike}
\bibliography{biblio}

\end{document}